\newfont{\mycrnotice}{ptmr8t at 7pt}
\newfont{\myconfname}{ptmri8t at 7pt}
\let \originalleft \left
\let\originalright\right
\renewcommand{\left}{\mathopen{}\mathclose\bgroup\originalleft}
\renewcommand{\right}{\aftergroup\egroup\originalright}
\newcommand{\comment}[1]{}
\newcommand{\wcost}{\omega}
\newcommand{\branchfactor}{k}
\newcommand{\mb}[1]{{\mbox{\emph{#1}}}}
\newcommand{\RRAM}{ReRAM}
\definecolor{ForestGreen}{rgb}{0.1333,0.5451,0.1333}
\let\OldStatex\Statex
\renewcommand{\Statex}[1][3]{%
  \setlength\@tempdima{\algorithmicindent}%
  \OldStatex\hskip\dimexpr#1\@tempdima\relax}
\newcommand{\Statexi}[1][1]{%
  \setlength\@tempdima{\algorithmicindent}%
  \OldStatex\hskip\dimexpr#1\@tempdima\relax}
\newcommand{\Statexii}[1][2]{%
  \setlength\@tempdima{\algorithmicindent}%
  \OldStatex\hskip\dimexpr#1\@tempdima\relax}
\newcommand{\defn}[1]{\emph{\textbf{#1}}}
\newcommand{\myparagraph}[1]{\smallskip\noindent {\bf #1.}}
\newcommand{\id}[1]{\ifmmode\mathit{#1}\else\textit{#1}\fi}
\newcommand{\const}[1]{\ifmmode\mbox{\textc{#1}}\else\textsc{#1}\fi}
\newtheorem{theorem}{Theorem}[section]
\newtheorem{lemma}[theorem]{Lemma}
\newtheorem{corollary}[theorem]{Corollary}
\begin{document}
\date{}

\title{Sorting with Asymmetric Read and Write Costs\footnote{The original version of this paper
is published in SPAA 2015.  In this version we fixed some minor typos and errors, and clarified some definitions.}}

\author{Guy
  E. Blelloch\\Carnegie Mellon University\\guyb@cs.cmu.edu \and
  Jeremy T. Fineman\\Georgetown
    University\\jfineman@cs.georgetown.edu \and Phillip
  B. Gibbons\\Carnegie Mellon University\\
    gibbons@cs.cmu.edu  \and Yan Gu\\Carnegie Mellon
    University\\ yan.gu@cs.cmu.edu \and Julian Shun\\Carnegie Mellon
    University\\ jshun@cs.cmu.edu}
\maketitle

\begin{abstract}

Emerging memory technologies have a significant gap between the cost,
both in time and in energy, of writing to memory versus reading from
memory.  In this paper we present models and algorithms that account
for this difference, with a focus on write-efficient sorting
algorithms.  First, we consider the PRAM model with asymmetric write
cost, and show that sorting can be performed in $O\left(n\right)$
writes, $O\left(n \log n\right)$ reads, and logarithmic depth
(parallel time).  Next, we consider a variant of the External Memory
(EM) model that charges $\wcost > 1$ for writing a block of size $B$
to the secondary memory, and present variants of three EM sorting
algorithms (multi-way mergesort, sample sort, and heapsort using
buffer trees) that asymptotically reduce the number of writes over the
original algorithms, and perform roughly $\wcost$ block reads for
every block write.  Finally, we define a variant of the Ideal-Cache
model with asymmetric write costs, and present write-efficient,
cache-oblivious parallel algorithms for sorting, FFTs, and matrix
multiplication.  Adapting prior bounds for work-stealing and
parallel-depth-first schedulers to the asymmetric setting, these yield
parallel cache complexity bounds for machines with private caches or with
a shared cache, respectively.

\end{abstract}


\section{Introduction}\label{sec:intro}

Emerging nonvolatile/persistent memory (NVM) technologies such as
Phase-Change
Memory (PCM), Spin-Torque Transfer Magnetic RAM (STT-RAM), and
Memristor-based Resistive RAM (\RRAM) offer the promise of
significantly lower energy and higher density (bits per area) than
DRAM. With byte-addressability and read latencies approaching or
improving on DRAM speeds, these NVM technologies are projected to
become the dominant memory within the decade~\cite{Meena14,Yole13}, as
manufacturing improves and costs decrease.


Although these NVMs could be viewed as just a layer in the memory hierarchy
that provides persistence, there is one important distinction:
\emph{Writes are significantly more costly than reads,} suffering from
higher latency, lower per-chip
bandwidth, higher energy costs, and endurance problems (a cell wears
out after $10^{8}$--$10^{12}$ writes~\cite{Meena14}).  Thus, unlike DRAM, there
is a significant (often an order of magnitude or more) asymmetry
between read and write costs~\cite{Akel11,Athanassoulis12,
Dong09,Dong08,ibm-pcm14b,Kim14,Qureshi12,Xu11}, motivating the
study of {\em write-efficient} ({\em write-limited}, {\em write-avoiding})
algorithms, which reduce the number of writes relative to existing algorithms.

\myparagraph{Related Work}
While read-write asymmetry in NVMs has been the focus of many systems
efforts~\cite{ChoL09,LeeIMB09,yang:iscas07,ZhouZYZ09,ZWT13}, there have been
very few papers addressing this from the algorithmic perspective.
Reducing the number of writes has long been a goal in disk arrays,
distributed systems, cache-coherent multiprocessors, and the like,
but that work has not focused on NVMs and the solutions are not suitable
for their properties.  Several
papers~\cite{BT06,Eppstein14,Gal05,nath:vldbj10,ParkS09,Viglas14} have
looked at
read-write asymmetries in the context of NAND flash memory. This work
has focused on (i) the fact that in NAND flash chips, bits can only be
cleared by incurring the overhead of erasing a large block of memory,
and/or (ii) avoiding individual cell wear out.  Eppstein et
al.~\cite{Eppstein14}, for
example, presented a novel cuckoo hashing algorithm that chooses where
to insert/reinsert each item based on the number of writes to date for
the cells it hashes to.  Emerging NVMs, in contrast, do not suffer
from (i) because they can write arbitrary bytes in-place.  As for
(ii), we choose not to focus on wear out (beyond reducing the total
number of writes to all of memory) because system software (e.g., the garbage
collector, virtual memory manager, or virtual machine hypervisor) can
readily balance application writes across the physical memory over the long
time horizons (many orders of magnitude longer than NAND Flash) before an
individual cell would wear out from too many writes to it.

A few prior works~\cite{Chen11,Viglas14,Viglas12} have looked at
algorithms for asymmetric read-write costs in emerging NVMs, in the
context of databases. Chen et al.~\cite{Chen11} presented analytical
formulas for PCM latency and energy, as well as algorithms for
B-trees and hash joins that are tuned for PCM. For example, their
B-tree variant does not sort the keys in a leaf node nor repack a leaf
after a deleted key, thereby avoiding the write cost of sorting and
repacking, at the expense of additional reads when
searching. Similarly, Viglas~\cite{Viglas12} traded off fewer writes
for additional reads by rebalancing a B$^+$-tree only if the cost of
rebalancing has been amortized.  Viglas~\cite{Viglas14} presented
several ``write-limited'' sorting and join algorithms within the
context of database query processing.

\myparagraph{Our Results}
In this paper, we seek to systematically study algorithms under
asymmetric read and write costs. We consider natural extensions to the
RAM/PRAM models, to the External Memory model, and to the Ideal-Cache
model to incorporate an integer cost, $\wcost > 1$, for writes
relative to reads.
We focus primarily on sorting algorithms, given their fundamental role
in computing and as building blocks for other problems, especially in
the External Memory and Parallel settings---but we also consider
additional problems.

We first
observe that in the RAM model, it is well known that sorting by
inserting each key into a balanced search tree requires only
$O\left(n\right)$ writes with no increase in reads
($O\left(n \log n\right)$).  Applying this idea to a carefully-tuned
sort for the asymmetric CRCW PRAM yields a parallel sort with
$O\left(n\right)$ writes, $O\left(n \log n\right)$ reads and
$O\left(\wcost{} \log n\right)$ depth (with high probability\footnote{\emph{With high probability (w.h.p.)} means with probability $1 - n^{-c}$, for a
constant $c$.}).

Next, we consider an Asymmetric External Memory (AEM) model, which has
a small primary memory (cache) of size $M$ and transfers data in blocks
of size $B$ to (at a
cost of $\wcost$) and from (at unit cost) an unbounded external memory.
We show that three asymptotically-optimal EM
sorting algorithms can each be adapted to the AEM with reduced write
costs.  First, following~\cite{ParkS09,Viglas14}, we adapt multi-way
mergesort by merging $O(\wcost) M/B$ sorted runs at a time (instead of
$M/B$ as in the original EM version).  This change saves writes by
reducing the depth of the recursion.  Each merge makes $O(\wcost)$ passes
over the runs, using an in-memory heap to extract values for the
output run for the pass.  Our algorithm and analysis is somewhat
simpler than~\cite{ParkS09,Viglas14}.  Second, we present an AEM
sample sort algorithm that uses $O(\wcost) M/B$ splitters at each level
of recursion (instead of $M/B$ in the original EM version). Again, the
challenge is to both find the splitters and partition using them while
incurring only $O\left(N/B\right)$ writes across each level of
recursion.  We also show how this algorithm can be parallelized to run
with linear speedup on the Asymmetric Private-Cache model
(Section~\ref{sec:prelims}) with $p = n/M$ processors.  Finally, our
third sorting algorithm is an AEM heapsort using a buffer-tree-based
priority queue.  Compared to the original EM algorithm, both our
buffer-tree nodes and the number of elements stored outside the buffer
tree are larger by a factor of $O(\wcost{})$, which adds nontrivial
changes to the data structure.  All three sorting algorithms have the
same asymptotic complexity on the AEM.

Finally, we define an Asymmetric Ideal-Cache model, which is similar
to the AEM model in terms of $M$ and $B$ and having asymmetric read/write
costs, but uses an asymmetric ideal replacement policy instead of
algorithm-specified transfers.  We extend important results for the
Ideal-Cache model and thread schedulers to the asymmetric
case---namely, the Asymmetric Ideal-Cache can be (constant factor)
approximated by an asymmetric-LRU cache, and it can be used in
conjunction with a work-stealing (parallel-depth-first) scheduler to
obtain good parallel cache complexity bounds for machines with private caches
(a shared cache, respectively).  We use this model to design write-efficient
cache-oblivious algorithms for sorting, Fast Fourier Transform, and
matrix multiplication.
Our sorting algorithm is adapted
from~\cite{BGS10}, and again deals with the challenges of reducing the
number of writes.  All three algorithms use
$\Theta\left(\wcost\right)$ times more reads than writes and have good parallelism.

\section{Preliminaries and Models}\label{sec:prelims}

This section presents background material on NVMs and models, as well as
new (asymmetric cost) models and results relating models.
We first consider models whose parallelism is in the parallel transfer
of the data in a larger block, then consider models with parallel
processors.

\myparagraph{Emerging NVMs} While DRAM stores data in capacitors that
typically require refreshing every few milliseconds,
and hence must be continuously powered, emerging NVM
technologies store data as ``states'' of the given material that
require no external power to retain.  Energy is required only to read
the cell or change its value (i.e., its state).  While there is no
significant cost difference between reading and writing DRAM (each
DRAM read of a location not currently buffered requires a write of
the DRAM row being evicted, and hence is also a write),
emerging NVMs such as Phase-Change Memory (PCM), Spin-Torque
Transfer Magnetic RAM (STT-RAM), and Memristor-based Resistive RAM
(\RRAM) each incur significantly higher cost for writing than reading.
This large gap seems fundamental to the technologies themselves: to
change the physical state of a material requires relatively
significant energy for a sufficient duration, whereas reading the
current state can be done quickly and, to ensure the state is left
unchanged, with low energy.  An STT-RAM cell, for example, can be read
in 0.14 $ns$ but uses a 10 $ns$ writing pulse duration, using roughly
$10^{-15}$ joules to read versus $10^{-12}$ joules to
write~\cite{Dong08} (these are the raw numbers at the materials
level).  A Memristor \RRAM{} cell uses a 100 $ns$ write pulse duration, and
an 8MB Memrister \RRAM{} chip is projected to have reads with 1.7 $ns$
latency and 0.2 $nJ$ energy versus writes with 200 $ns$ latency and 25 $nJ$
energy~\cite{Xu11}---over two orders of magnitude differences in latency
and energy.  PCM is the most mature of the three technologies, and
early generations are already available as I/O devices.  A recent
paper~\cite{Kim14} reported 6.7 $\mu s$ latency for a 4KB read and
128 $\mu s$ latency for a 4KB write.  Another reported that the
sector I/O latency and bandwidth for random 512B writes was a factor
of 15 worse than for reads~\cite{ibm-pcm14b}.  As a future memory/cache
replacement, a 512Mb PCM memory chip is projected to have 16 $ns$ byte
reads versus 416 $ns$ byte writes, and writes to a 16MB PCM L3 cache
are projected to be up to 40 times slower and use 17 times more energy
than reads~\cite{Dong09}.  While these numbers are speculative and subject
to change as the new technologies emerge over time, there seems to be
sufficient evidence that writes will be considerably more costly than
reads in these NVMs.

\myparagraph{Sorting} The sorting problem we consider is the standard
comparison based sorting with $n$ records each containing a key.  We
assume the input is in an unsorted array, and the output needs to be
placed into a sorted array.  Without loss of generality,
we assume the keys are unique (a position index can always be added to make them unique).

\myparagraph{The Asymmetric RAM model} This is the standard RAM model
but with a cost $\wcost > 1$ for writes, while reads are still unit cost.

\myparagraph{The (Asymmetric) External Memory model} The widely studied
External Memory (EM) model~\cite{AggarwalV88} (also called I/O model,
Disk Model and Disk Access Model) assumes a two level memory hierarchy
with a fixed size primary memory (cache) of size $M$ and a secondary
memory of unbounded size.  Both are partitioned into blocks of size
$B$.  Standard RAM instructions can be used within the primary memory,
and in addition the model has two special \emph{memory transfer}
instructions: a \emph{read} transfers (alternatively, copies) an
arbitrary block from secondary memory to primary memory, and a
\emph{write} transfers an arbitrary block from primary to secondary
memory.  The I/O complexity of an algorithm is the total number of
memory transfers.  Sorting $n$ records can be performed in the EM model with I/O
complexity
\begin{equation}
\label{eqn:sortbound}
\Theta\left(\frac{n}{B} \log_{\frac{M}{B}} \frac{n}{B}\right)
\end{equation}
This is both an upper and lower bound~\cite{AggarwalV88}.  The upper
bound can be achieved with at least three different algorithms, a
multi-way mergesort~\cite{AggarwalV88}, a distribution
sort~\cite{AggarwalV88}, and a priority-queue (heap) sort based on
buffer trees~\cite{Arge03}.

The \emph{Asymmetric External Memory} (AEM) model simply adds a
parameter $\wcost$ to the EM model, and charges this for each write of a block.
Reading a block still has unit cost.

Throughout the paper, we assume that $M$ and $B$ are measured in terms
of the number of data objects.  If we are sorting, for example, it is
the number records.  We assume that the memory has an extra $O\left(\log
M\right)$ locations just for storing a stack to compute with.

\myparagraph{The (Asymmetric) Ideal-Cache model} The Ideal-Cache
model~\cite{Frigo99} is a variant of the EM model.  The machine model
is still organized in the same way with two memories each partitioned
into blocks, but there are no explicit memory transfer instructions.
Instead all addressable memory is in the secondary memory, but any
subset of up to $M/B$ of the blocks can have a copy resident in the
primary memory (cache).  Any reference to a resident block is a
\emph{cache hit} and is free.  Any reference to a word in a block that
is not resident is a \emph{cache miss} and requires a memory transfer
from the secondary memory.  The cache miss can replace a block in the 
cache with the loaded block, which might require \emph{evicting} a cache block.
The model makes the \emph{tall cache assumption} where $M = \Omega(B^2)$,
which is easily met in practice.
The I/O or \emph{cache complexity} of an algorithm is
the number of cache misses.  An optimal (offline) cache eviction
policy is assumed---i.e., one that minimizes the I/O complexity.  It is
well known that the optimal policy can be approximated using the
online least recently used (LRU) policy at a cost of at most doubling
the number of misses, and doubling the cache size~\cite{Sleator85}.

The main purpose of the Ideal-Cache model is for the design of
\emph{cache-oblivious algorithms}.  These are algorithms that do not
use the parameters $M$ and $B$ in their design, but for which one can
still derive effective bounds on I/O complexity.  This has the
advantage that the algorithms work well for any cache sizes on any
cache hierarchies.  The I/O complexity of cache-oblivious sorting is
asymptotically the same as for the EM model.

We define the \emph{Asymmetric Ideal-Cache} model by distinguishing
reads from writes, as follows.  A cache block is \emph{dirty} if the
version in the cache has been modified since it was brought into the
cache, and \emph{clean} otherwise.  When a cache miss evicts a clean
block the cost is $1$, but when evicting a dirty block the cost is $1
+ \wcost$, $1$ for the read and $\wcost$ for the write.  Again, we
assume an ideal offline cache replacement policy---i.e., minimizing the
total I/O cost.  Under this model we note that the LRU policy is no
longer 2-competitive.  However, the following variant is competitive
within a constant factor.  The idea is to separately maintain two
equal-sized pools of blocks in the cache (primary memory), a read pool
and a write pool.  When reading a location, (i) if its block is in the
read pool we just read the value, (ii) if it is in the write pool we
copy the block to the read pool, or (iii) if it is in neither, we read
the block from secondary memory into the read pool.  In the latter two
cases we evict the LRU block from the read pool if it is full, with
cost 1.  The rules for the write pool are symmetric when writing to a
memory location, but the eviction has cost $\wcost+1$ because the
block is dirty.  We call this the read-write LRU policy.  This policy
is competitive with the optimal offline policy:

\begin{lemma}
For any sequence $S$ of instructions, if it has cost $Q_I\left(S\right)$ on
the Asymmetric Ideal-Cache model with cache size $M_I$,
then it will have cost
\[Q_L\left(S\right) \leq \frac{M_L}{\left(M_L - M_I\right)}Q_I\left(S\right) + (1 + \wcost) M_I/B\] on
an asymmetric cache with read-write LRU policy and cache sizes (read
and write pools) $M_L$.
\end{lemma}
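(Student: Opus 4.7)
The plan is to lift the classical Sleator--Tarjan competitive analysis of LRU to the asymmetric two-pool policy. The key structural observation is that the read pool is updated only by read operations (a read either hits it or loads the block while evicting the LRU block at cost $1$), whereas the write pool is updated only by writes (each miss loads a block and evicts a dirty block at cost $1+\wcost$). Consequently, the sequence of read-pool states induced by the online policy on $S$ coincides with that of ordinary LRU with cache $M_L$ run on the read-only subsequence $S_R$, and symmetrically the write pool is LRU on $S_W$. Letting $r_L$ and $w_L$ denote these two miss counts, the online cost satisfies $Q_L(S) \leq r_L + (1+\wcost)\,w_L$.

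The first step is to bound $r_L$ and $w_L$ via a Sleator--Tarjan phase argument applied directly to the full sequence $S$. I partition $S$ into maximal ``read phases,'' each marked by $M_L/B$ distinct read-accessed blocks, and into analogous ``write phases.'' In each read phase the online read pool incurs at most $M_L/B$ misses, while any offline policy with cache $M_I$ must load at least $(M_L-M_I)/B$ of the phase's marked read-blocks during the phase---at most $M_I/B$ can be resident at the phase boundary, and every remaining marked block must be loaded before its first in-phase use. A symmetric statement holds for write phases. Summing over phases and paying an $O(M_I/B)$ additive term for the first phase of each type yields
\[
r_L \;\leq\; \frac{M_L}{M_L-M_I}\,r_I + M_I/B
\quad\text{and}\quad
w_L \;\leq\; \frac{M_L}{M_L-M_I}\,w_I + M_I/B,
\]
where $r_I$ and $w_I$ respectively count the offline's read-triggered and write-triggered loads.

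The second step relates $r_I + (1+\wcost)\,w_I$ to $Q_I(S)$. Every write-triggered offline load produces a block that becomes dirty after its first in-cache write. Each such block that is later evicted pays an additional $\wcost$; only the $\leq M_I/B$ dirty blocks still resident when $S$ ends escape this charge. Hence
\[
Q_I(S) \;\geq\; r_I + w_I + \wcost\,(w_I - M_I/B) \;=\; r_I + (1+\wcost)\,w_I \,-\, \wcost\,M_I/B.
\]
Substituting into the Step~1 bounds, using $Q_L(S) \leq r_L + (1+\wcost)w_L$, and collecting the $O(M_I/B)$ slack into the stated additive $(1+\wcost)M_I/B$ term produces the claimed inequality.

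The main obstacle is the per-phase accounting in Step~1. Because the offline's unified $M_I$-cache allows a load triggered by one operation type to save a later miss of the other type, a block that is both read and written inside a phase could naively be counted toward both the read-phase and the write-phase lower bounds. My plan is to charge each offline load uniquely to the phase of its triggering operation and to argue that the remaining uncharged loads still suffice to meet each phase's $(M_L-M_I)/B$ lower bound, leveraging the fact that an offline policy which heavily mixes read and write accesses on the same blocks must still perform at least one load of the triggering type per ``new'' phase block. This charging scheme, combined with the dirty-eviction amortization of Step~2, is the technical crux.
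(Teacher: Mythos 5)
Your overall route is the one the paper takes: observe that the read and write pools evolve as two independent LRU caches over the read and write subsequences, and then run a Sleator--Tarjan phase argument separately over read-phases and write-phases. Your Step~2, lower-bounding $Q_I(S)$ via the fact that every write-triggered load creates a dirty block whose eventual eviction pays $\wcost$ (except for the $\le M_I/B$ dirty blocks left at the end), is a clean and valid observation; the paper states the corresponding fact more tersely (``each operation involves evicting a dirty block'').

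However, there is a genuine gap in Step~1 that your own ``main obstacle'' paragraph flags but does not close, and the proposed charging scheme cannot close it. The phase argument over a read-phase shows that the ideal cache must \emph{load} at least $(M_L-M_I)/B$ of that phase's marked read-blocks, but those loads need not be \emph{read-triggered}: if a marked block is first written in the phase before its first read, the load is write-triggered and hence counted in $w_I$, not $r_I$. In the extreme, every one of the $(M_L-M_I)/B$ obligatory loads in a read-phase could be write-triggered, leaving $r_I = 0$ contributed by that phase, so the inequality $r_L \le \frac{M_L}{M_L-M_I}r_I + M_I/B$ simply fails. Your plan to ``charge each offline load uniquely to the phase of its triggering operation and argue that the remaining uncharged loads still suffice'' cannot rescue this: after removing write-triggered loads, nothing remains in such a phase. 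The way to avoid the problem---which is what the paper's sketch implicitly relies on---is \emph{not} to split offline loads by triggering type. Instead, bound $r_L$ against the total number of offline misses $\mu$ (the read-phase argument forces $\ge (M_L-M_I)/B$ misses per phase, regardless of trigger), and bound the $\wcost$-weighted part of $w_L$ against the number of offline \emph{clean-to-dirty transitions} (the write-phase argument forces $\ge (M_L-M_I)/B$ of the phase's write-blocks, at most $M_I/B$ of which were already dirty at phase start, to become dirty within the phase, and each such transition eventually costs $\wcost$ except for the $\le M_I/B$ that stay dirty at the end). Because ``misses'' and ``dirty evictions'' are disjoint terms in $Q_I = \mu + \wcost\delta$, this decomposition does not double-count; your $r_I/w_I$ split, by contrast, partitions the same quantity $\mu$ two different ways and then tries to align them with the two phase decompositions, which is exactly where the overlap arises.
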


\begin{proof} Partition the sequence of instructions into
regions that contain memory reads to exactly $M_L/B$ distinct memory
blocks each (except perhaps the last).  Each region will require at most $M_L/B$
misses under LRU.  Each will also require at least $(M_L - M_I)/B$ cache
misses on the ideal cache since at most $M_I/B$ blocks can be in the
cache at the start of the region.  The same argument can be made for
writes, but in this case each operation involves evicting a dirty block.
The $(1 + \wcost) M_I/B$ is for the last region.  
To account for the last region, in the worst case at the start of the
last write region the ideal cache starts with $M_I/B$ blocks which get
written to, while the LRU starts with none of those blocks.  The LRU therefore
invokes an addition $M_I/B$ write misses each costing $1 + \wcost$ (1
for the load and $\wcost$ for the eviction).   
Note that if the cache starts empty then we do not have to add this term
since an equal amount will be saved in the first round.
\end{proof}

\myparagraph{The Asymmetric PRAM model} In the \emph{Asymmetric PRAM},
the standard PRAM is augmented such that each write costs $\wcost$ and
all other instructions cost $1$.  In this paper we analyze algorithms
in terms of work (total cost of the operations) and depth (parallel
time using an unbounded number of processors).  If we have depth
$d\left(n\right)$ and separate the work into $w\left(n\right)$ writes
and $r\left(n\right)$ other instructions, then the time on $p$
processors is bounded by:
\[T\left(n,p\right) = O\left(\frac{\wcost w\left(n\right) +
    r\left(n\right)}{p} + d\left(n\right)\right)\]
using Brent's theorem~\cite{JaJa92}.  This bound assumes that work can
be allocated to processors efficiently.  We allow for concurrent reads
and writes (CRCW), and for concurrent writes we assume an arbitrary
write takes effect.  Note that a parallel algorithm that require
$O\left(D\right)$ depth in the PRAM model requires
$O\left(\wcost D\right)$ depth in the asymmetric PRAM model to account
for the fact that writes are $\wcost$ times more expensive than reads.

\myparagraph{The Asymmetric Private-Cache model} In the
\emph{Asymmetric Private-Cache} model (a variant of the Private-Cache
model~\cite{Acar02,Arge2008}), each processor has its own primary
memory of size $M$, and all processors share a secondary memory.  We
allow concurrent reads but do not use concurrent writes.  As in the
AEM model, transfers are in blocks of size $B$ and transfers to the
shared memory cost $\wcost$.

\myparagraph{The (Asymmetric) Low-depth Cache-Oblivious Paradigm} The
final model that we consider is based on developing low-depth
cache-oblivious algorithms~\cite{BGS10}.  In the model algorithms are
defined as nested parallel computations based on parallel loops,
possibly nested (this is a generalization of a PRAM).  The depth of
the computation is the longest chain of dependences---i.e., the depth
of a sequential strand of computation is its sequential cost, and the
depth of a parallel loop is the maximum of the depths of its iterates.
The computation has a natural sequential order by converting each
parallel loop to a sequential loop.  The cache complexity can be
analyzed on the Ideal-Cache model under this sequential order.

Using known scheduling results the depth and sequential cache
complexity of a computation are sufficient for deriving bounds on
parallel cache complexity.  In particular, let $D$ be the depth and
$Q_1$ be the sequential cache complexity.  Then for a $p$-processor
shared-memory machine with private caches (each processor has its own
cache) using a work-stealing scheduler, the total number of misses
$Q_p$ across all processors is at most $Q_1 + O\left(pDM/B\right)$
with high probability~\cite{Acar02}.  For a $p$-processor shared-memory
machine with a shared cache
of size $M + p B D$ using a parallel-depth-first (PDF) scheduler,
$Q_p \leq Q_1$~\cite{BlGi04}.  These bounds can be extended to
multi-level hierarchies of private or shared caches,
respectively~\cite{BGS10}.  Thus, algorithms with low depth have good
parallel cache complexity.

Our asymmetric variant of the low-depth cache-oblivious paradigm
simply accounts for $\wcost$ in the depth and uses the Asymmetric
Ideal-Cache model for sequential cache complexity.  We observe that
the above scheduler bounds readily extend to this asymmetric setting.
The $O\left(pDM/B\right)$ bound on the additional cache misses under
work-stealing arises from an $O\left(pD\right)$ bound on the number of steals
and the observation that each steal requires the stealer to incur
$O\left(M/B\right)$ misses to ``warm up'' its cache.  Pessimistically, we will
charge $2M/B$ writes (and reads) for each steal, because each line may
be dirty and need writing back before the stealer can read it into its
cache and, once the stealer has completed the stolen work (reached the
join corresponding to the fork that spawned the stolen work), the
contents of its cache may need to be written back.  Therefore for
private caches we have $Q_P \leq Q_1 + O\left(pkDM/B\right)$.  The PDF bounds
extend because there are no additional cache misses and hence no
additional reads or writes.

\section{Sorting on RAM/PRAM}\label{sec:ram-models}

The number of writes on an asymmetric RAM can be bound for a variety
of algorithms and data structures using known techniques.  For example,
there has been significant research on maintaining balanced search
trees such that every insertion and deletion only requires a constant
number of rotations (see e.g.,~\cite{Ottmann} and references within).
While the motivation for that work is that for certain data structures
rotations can be asymptotically more expensive than visiting a node
(e.g., if each node of a tree maintains a secondary set of keys),
the results apply directly to improving bounds on the
asymmetric RAM.
Sorting can be done by inserting $n$ records into a balanced search tree
data structure, and then reading
them off in order.  This requires $O\left(n \log n\right)$ reads
and $O\left(n\right)$ writes, for total cost $O\left(n \left(\wcost +
\log n\right)\right)$.  Similarly, we can maintain priority queues
(insert and delete-min) and comparison-based dictionaries (insert,
delete and search) in $O\left(1\right)$ writes per operation.

\begin{algorithm}[!tp]
\caption{\textsc{Asymmetric-PRAM Sort}}
\label{alg:pramsort}

\textbf{Input:} An array of records $A$ of length $n$

\begin{algorithmic}[1]

\State {Select a sample $S$ from $A$ independently at random with
  per-record probability $1/\log n$, and sort the sample.}

\State {Use every $(\log n)$-th element in the sorted $S$ as splitters, and for
  each of the about $n/\log^2n$ buckets defined by the splitters
  allocate an array of size $c\log^2 n$.}

\State {In parallel locate each record's bucket using a binary search on
the splitters.}

\State {In parallel insert the records into their buckets by
  repeatedly trying a random position within the associated array and
  attempting to insert if empty.}

\State {Pack out all empty cells in the arrays and concatenate all arrays.
// Step 6 is an optional step used to obtain $O(\wcost\log n)$ depth}


\State {{\bf For} round $r\leftarrow 1$ to 2 {\bf do}}

\Statexi {{\bf for each} array $A'$ generated in previous round}
\Statexii {Deterministically select $|A'|^{1/3}-1$ samples as splitters}
\Statexii {and apply integer sort on the bucket number to partition}
\Statexii {$A'$ into $|A'|^{1/3}$ sub-arrays.}


\State {{\bf For each} subarray apply the asymmetric RAM sort.}

\State {Return the sorted array.}

\end{algorithmic}
\end{algorithm}

We now consider how to sort on an asymmetric CRCW PRAM (arbitrary
write).  Algorithm~\ref{alg:pramsort} outlines a sample sort (with
over-sampling) that does $O\left(n\log n\right)$ reads and
$O\left(n\right)$ writes and has depth $O\left(\wcost \log n\right)$.
It is similar to other sample sorts~\cite{Blelloch91,Frazer70,JaJa92}.
We consider each step in more detail and analyze its cost.

Step 1 can use Cole's parallel mergesort~\cite{Cole1988} requiring
$O\left(n\right)$ reads and writes w.h.p.~(because the sample is size
$\Theta(n/\log n)$ w.h.p.), and $O\left(\wcost\log n\right)$ depth.  In step 2 for
sufficiently large $c$, w.h.p.~all arrays will have at
least twice as many slots as there are records belonging to the
associated
bucket~\cite{Blelloch91}.  The cost of step 2 is a lower-order term.
Step 3 requires $O(n \log n)$ reads, $O(n)$ writes and $O(\wcost +
\log n)$ depth for the binary searches and writing the resulting
bucket numbers.  Step 4 is an instance of the so-called placement
problem (see~\cite{RR89,Reif99}).  This can be implementing by having
each record select a random location within the array associated with
its bucket and if empty, attempting to insert the record at that
location.  This is repeated if unsuccessful.  Since multiple records
might try the same location at the same time, each record needs to
check if it was successfully inserted.  The expected number of tries
per record is constant.  Also, if the records are partitioned into
groups of size $\log n$ and processed sequentially within the group
and in parallel across groups, then w.h.p.~no group
will require more than $O(\log n)$ tries across all of its
records~\cite{RR89}.   Therefore, w.h.p., the number of reads and writes for
this step are $O(n)$ and the depth is $O\left(\wcost\log n\right)$.
Step 5 can be done with a prefix sum,
requiring a linear number of reads and writes, and $O\left(\wcost\log
n\right)$ depth.  At this point we could apply the asymmetric RAM sort
to each bucket giving a total of $O\left(n\log n\right)$ reads,
$O\left(n\right)$ writes and a depth of
$O\left(\wcost\log^2n+\log^2n\log\log n\right)$ w.h.p.~(the first term
for the writes and second term for the reads).

We can reduce the depth to $O\left(\wcost\log n\right)$ by
further deterministically sampling inside each bucket (step 6) using the
following lemma:

\begin{lemma}\label{lemma:partition}
We can partition $m$ records into $m^{1/3}$ buckets
$M_1,\ldots,M_{m^{1/3}}$ such that for any $i$ and $j$ where $i < j$
all records in $M_i$ are less than all records in $M_j$, and for all
$i$, $|M_i| < m^{2/3}\log m$. The process requires $O\left(m\log m\right)$ reads,
$O\left(m\right)$ writes, and $O\left(\wcost\sqrt{m}\right)$ depth.
\end{lemma}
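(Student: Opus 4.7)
My plan is a sample-sort-style partitioning. I first form a random sample $S$ by flagging each of the $m$ records independently with probability $p = m^{-1/3}$; by a Chernoff bound, $|S| = \Theta(m^{2/3})$ w.h.p. The flagging and copying step uses $O(m)$ reads and $O(m^{2/3})$ writes. I then sort $S$ with any low-depth write-efficient sort (for instance Cole's mergesort restricted to the sampled records), contributing $O(m^{2/3}\log m)$ reads, $O(m^{2/3}\log m)$ writes, and $O(\wcost\log m)$ depth---all strictly dominated by the lemma's budget. Deterministically selecting every $m^{1/3}$-th element of the sorted sample yields $m^{1/3}-1$ splitters, which define the $m^{1/3}$ buckets $M_1,\ldots,M_{m^{1/3}}$.

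In parallel, each of the $m$ input records then locates its bucket by binary-searching the splitters: $O(\log m)$ reads, $O(1)$ writes, and $O(\wcost+\log m)$ depth per record, so $O(m\log m)$ reads and $O(m)$ writes overall. A standard oversampling Chernoff argument (cf.~\cite{Blelloch91}) shows that, because the sample-to-splitter ratio is $m^{1/3}$ (easily exceeding $\log m$), every bucket contains $O(m^{2/3})$ records w.h.p., comfortably below the $m^{2/3}\log m$ target.

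The remaining step is to physically lay out records into contiguous arrays $M_1,\ldots,M_{m^{1/3}}$, i.e., to integer-sort on the $m^{1/3}$ bucket indices. I would compute a histogram of bucket counts, prefix-sum to obtain each bucket's starting offset, and then place each record into its slot using a single write. To fit both the write and depth budgets, I partition the records into $\sqrt m$ groups of size $\sqrt m$; each group sequentially builds a local histogram (depth $O(\wcost\sqrt m)$, with $O(\sqrt m)$ writes per group), the $\sqrt m$ local histograms are combined by parallel prefix sum (writes $O(m^{5/6})=o(m)$, depth $O(\log m)$), and finally each record writes itself once to the slot determined by its per-group and per-bucket offsets. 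Summing over all three phases gives $O(m\log m)$ reads, $O(m)$ writes, and $O(\wcost\sqrt m)$ depth.

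The main obstacle I anticipate is precisely this last integer-sort step. A naive counting sort that uses a global fetch-and-add counter per bucket incurs $\Omega(m)$ writes to the counters themselves (beyond the writes for moving records), and aggressive parallelization creates concurrent-write contention whose serialization can blow past the $O(\wcost\sqrt m)$ depth budget. The grouped local-histogram scheme above is what simultaneously achieves one write per record, $o(m)$ bookkeeping writes, and $O(\wcost\sqrt m)$ depth. The rest---the sample-size Chernoff bound, the binary-search accounting, and the concentration of bucket sizes---is routine given the generous $m^{1/3}$ oversampling factor.
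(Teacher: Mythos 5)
Your overall plan (sample, determine splitters, bucket via binary search, then place with a counting/radix sort whose histograms are built sequentially within $\sqrt m$-sized groups) is in the same sample-sort family as the paper's proof, and your cost accounting for the reads, writes, and $O(\wcost\sqrt m)$ depth is essentially right. But there is a genuine gap in the central step: the bucket-size guarantee.

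You derive the bound $|M_i|=O(m^{2/3})$ \emph{with high probability in $m$} via a Chernoff argument on a random sample of rate $m^{-1/3}$. The lemma is stated with no probabilistic qualifier, and more importantly it is invoked in Step~6 of the PRAM sort on buckets of size $m = O(\log^2 n)$, where the ``w.h.p.''\ of the enclosing theorem is with respect to $n$. Your Chernoff tail is $e^{-\Omega(m^{1/3})} = e^{-\Omega(\log^{2/3}n)}$, which is \emph{not} $n^{-\Theta(1)}$; after a union bound over the $\Theta(n/\log^2 n)$ buckets the failure probability does not remain polynomially small, so the final depth bound of the theorem would not hold w.h.p.\ in $n$. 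The oversampling factor $m^{1/3}$ ``easily exceeding $\log m$'' is not the relevant comparison --- it would have to exceed $\Theta(\log n)$, and for $m$ polylogarithmic in $n$ it does not.

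The paper sidesteps this exactly by making the splitter selection deterministic: it first splits the $m$ records into $m^{2/3}$ groups of size $m^{1/3}$, sorts each group (depth $O(\wcost m^{1/3}\log m)$, dominated by the $O(\wcost\sqrt m)$ integer-sort step), and then samples every $\log m$'th element of each sorted group. The resulting bucket bound $|M_i| < m^{2/3}\log m$ is a combinatorial fact, not a tail bound: between two consecutive splitters, each of the $m^{2/3}$ sorted groups can contribute at most $\log m$ consecutive records without itself producing a splitter. To repair your argument you would either need to reproduce this deterministic group-sort-and-sample scheme, or increase the sample so that the oversampling factor is $\Omega(\log n)$ rather than $\Omega(\log m)$ (which would require the lemma to know about $n$). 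Your integer-sort scheme (local histograms over $\sqrt m$ groups plus a prefix sum), on the other hand, is a reasonable instantiation of the parallel radix sort the paper simply cites from Rajasekaran--Reif, and the rest of the cost accounting is fine.
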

\begin{proof}
We first split the $m$ records into groups of size $m^{1/3}$ and sort
each group with the RAM sort. This takes $O\left(m\log m\right)$
reads, $O\left(m\right)$ writes and $O\left(\wcost m^{1/3}\log
m\right)$ depth. Then for each sorted group, we place every $\log
m$'th record into a sample. Now we sort the sample of size $m/\log m$
using Cole's mergesort, and use the result as splitters to partition
the remaining records into buckets. Finally, we place the records into
their respective buckets by integer sorting the records based on their
bucket number.  This can be done with a parallel radix sort in a
linear number of reads/writes and $O\left(\wcost\sqrt{m}\right)$
depth~\cite{RR89}.

To show that the largest bucket has size at most $m^{2/3}\log m$,
note that in each bucket, we can pick at most $\log m$ consecutive
records from each of the $m^{2/3}$ groups without picking a
splitter. Otherwise there will be a splitter in the bucket, which is a
contradiction.
\end{proof}


Step 6 applies two iterations of Lemma~\ref{lemma:partition} to each
bucket to partition it into sub-buckets. For an initial bucket of size
$m$, this process will create sub-buckets of at most size
$O\left(m^{4/9}\log^{5/3}m\right)$.  Plugging in $m = O\left(\log^2
n\right)$ gives us that the largest sub-bucket is of size
$O\left(\log^{8/9} n\left(\log\log n\right)^{5/3}\right)$.  We can now
apply the RAM sort to each bucket in $O\left(\wcost\log n\right)$
depth.  This gives us the following theorem.

\begin{theorem}
Sorting $n$ records can be performed using $O(n$ $\log n)$ reads,
$O\left(n\right)$ writes, and in $O\left(\wcost\log n\right)$ depth
w.h.p.~on the Asymmetric CRCW PRAM.
\end{theorem}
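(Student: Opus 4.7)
The plan is to verify the claimed bounds step by step, following the seven stages of Algorithm~\ref{alg:pramsort} and summing their individual contributions to reads, writes, and depth. Most of the per-step analysis has been sketched in the surrounding discussion, so the proof is really a careful accounting combined with a union bound over a constant number of high-probability events (sample size, bucket occupancy, placement probes, sub-bucket sizes after partitioning).

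I would first dispatch Steps~1--5 at the level already indicated. Step~1 runs Cole's mergesort on a sample whose expected size is $n/\log n$ and which concentrates by Chernoff, giving $O(n)$ reads/writes and $O(\wcost \log n)$ depth. Step~2 allocates the $\Theta(n/\log^2 n)$ buckets in lower-order cost, and a standard sample-sort argument shows that for $c$ large enough every bucket has at most half as many records as slots w.h.p. Step~3 performs $n$ binary searches for $O(n \log n)$ reads, $O(n)$ writes, and $O(\wcost + \log n)$ depth. Step~4 solves the placement problem in groups of $\log n$ records, using the standard randomized argument to bound the number of probes by $O(\log n)$ per group w.h.p., yielding $O(n)$ reads/writes and $O(\wcost \log n)$ depth. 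Step~5 packs empty slots by a prefix sum in $O(n)$ reads/writes and $O(\wcost \log n)$ depth.

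The crux is that naively applying the sequential asymmetric RAM sort to each bucket of size $m = \Theta(\log^2 n)$ after Step~5 would yield depth $\Theta(\wcost \log^2 n)$, missing the target by a $\log n$ factor; the only real obstacle is arranging that the final per-sub-bucket sort fits inside $O(\wcost \log n)$ depth. The plan is to apply Lemma~\ref{lemma:partition} twice per bucket (Step~6): one application shrinks size $m$ to at most $m^{2/3} \log m$, and a second brings it to $O(m^{4/9} \log^{5/3} m)$. Substituting $m = O(\log^2 n)$ gives sub-buckets of size $O(\log^{8/9} n \cdot (\log\log n)^{5/3}) = o(\log n)$, while the cost of Step~6 summed across all buckets is $O(n \log\log n)$ reads, $O(n)$ writes, and $O(\wcost \sqrt{m}) = O(\wcost \log n)$ depth per round. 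Step~7 then sorts each sub-bucket of size $m' = o(\log n)$ sequentially using the RAM sort in parallel across sub-buckets, contributing $O(n \log n)$ reads, $O(n)$ writes, and $O(\wcost m' + m' \log m') = o(\wcost \log n)$ depth. Summing the contributions and taking a union bound over the constantly many probabilistic events yields the claimed $O(n \log n)$ reads, $O(n)$ writes, and $O(\wcost \log n)$ depth w.h.p.
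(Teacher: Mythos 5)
Your proof is correct and follows essentially the same decomposition as the paper's: dispatching Steps 1--5 with the per-step bounds stated in the surrounding discussion, then invoking Lemma~\ref{lemma:partition} twice in Step~6 to shrink bucket sizes from $\Theta(\log^2 n)$ down to $O(\log^{8/9}n(\log\log n)^{5/3}) = o(\log n)$ so that the final sequential RAM sorts in Step~7 fit within $O(\wcost\log n)$ depth. The only minor inaccuracy is attributing $O(n\log n)$ reads to Step~7 (it is actually $O(n\log\log n)$ since each sub-bucket of size $m'=o(\log n)$ needs only $O(m'\log m')$ reads), but this over-estimate does not affect the theorem since the dominant $O(n\log n)$ read cost comes from the binary searches in Step~3.
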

This implies \[T\left(n\right) = O\left(\frac{n \log n + \wcost n}{p}
+ \wcost \log n\right)\] time.  Allocating work to processors is
outlined above or described in the cited references.  In the standard
PRAM model, the depth of our algorithm matches that of the best PRAM
sorting algorithm~\cite{Cole1988}, although ours is randomized and requires
the CRCW model.   We leave it open whether the same bounds can be met
deterministically and on a PRAM without concurrent writes.

\section{External Memory Sorting}\label{sec:em-sorting}

In this section, we present sorting algorithms for the Asymmetric External Memory
model.  We show how the three
approaches for EM sorting---mergesort, sample sort, and heapsort (using
buffer trees)---can each be adapted to the asymmetric case.

In each case we trade off a factor of $\branchfactor=O(\wcost)$ additional reads for a
larger branching factor ($\branchfactor{}M/B$ instead of $M/B$), hence reducing the
number of rounds.  It is interesting that the same general approach
works for all three types of sorting.  The first algorithm, the
mergesort, has been described elsewhere~\cite{ParkS09} although in a
different model (their model is specific to NAND flash memory and has different
sized blocks for reading and writing, among other differences).  Our
parameters are therefore different, and our analysis is new.  To
the best of our knowledge, our other two algorithms are new.

\subsection{Mergesort}\label{sec:io-merge-sort}

We use an $l$-way mergesort---i.e., a balanced tree of merges
with each merge taking in $l$ sorted arrays and outputting one sorted
array consisting of all records from the input.  We assume that once
the input is small enough a different sort (the \emph{base case}) is
applied.  For $l = M/B$ and a base case of $n \leq M$ (using any sort
since it fits in memory), we have the standard EM mergesort.  With
these settings there are $\log_{M/B} \left(n/M\right)$ levels of
recursion, plus the base case, each costing
$O\left({n}/{B}\right)$ memory operations.  This gives the
well-known overall bound from
Equation~\ref{eqn:sortbound}~\cite{AggarwalV88}.

To modify the algorithm for the asymmetric case, we increase the
branching factor and the base case by a factor of $\branchfactor$ where $1\leq \branchfactor\leq \wcost$ (i.e.  $l =
\branchfactor {}M /B$ and a base case of $n \leq \branchfactor M$).  This means that
it is no longer possible to keep the base case in the primary memory,
nor one block for each of the input arrays during a merge.  The
modified algorithm is described in Algorithm~\ref{alg:mergesort}.

\begin{algorithm}[!t]
\caption{\textsc{AEM-Mergesort}}
\label{alg:mergesort}

\textbf{Input:} An array $A$ of records of length $n$

\begin{algorithmic}[1]

\If {$|A|\leq \branchfactor M$} ~~~~// base case
\State { Sort $A$ using $\branchfactor |A|/B$ reads and $|A|/B$ writes, and return. }

\EndIf

\State {Evenly partition $A$ into $l=\branchfactor M/B$ subarrays
  $A_1,\ldots,A_{l}$ (at the granularity of blocks) and
  recursively apply \textsc{AEM-Mergesort} to each.}

\State {{\bf Initialize Merge.} Initialize an empty output array $O$,
  a load buffer and an empty store buffer each of size $B$,
  an empty priority queue $Q$ of size $M$,
  an array of pointers $I_1, \ldots, I_{l}$ that point to
  the start of each sorted subarray,
  $c=0$, and $\mb{lastV}=-\infty$.
  Associated with $Q$ is $Q.\mb{max}$, which holds the maximum element in
  $Q$ if $Q$ is full, and $+\infty$ otherwise.}

\While {$c<|A|$} 
\label{line:while}

\For {$i\leftarrow 1$ to $l$}
\label{line:for}

\State {\textsc{Process-Block}($i$).}

\EndFor

\While {$Q$ is not empty}
\label{line:while2}
\State {$e\leftarrow Q.\mb{deleteMin}$.}

\State {Write $e$ to the store buffer, $c\leftarrow c+1$.}

\State {If the store buffer is full, flush it to $O$ and update $\mb{lastV}$.}

\If {$e$ is marked as last record in its subarray block}
\State {$i = e.\mb{subarray}$.}
\State {Increment $I_i$ to point to next block in subarray $i$.}
\State {\textsc{Process-Block}($i$).}

\EndIf

\EndWhile

\EndWhile

\State {$A\leftarrow O$.~~// Logically, don't actually copy}

\medskip

\Function{Process-Block}{subarray $i$}

\State {\textbf{If} $I_i$ points to the end of the subarray \textbf{then} return.}


\State {Read the block $I_i$ into the load buffer.}

\ForAll {records $e$ in the block}

\If {$e.key$ is in the range $\left(\mb{lastV},Q.\mb{max}\right)$}

\State {If $Q$ is full, eject $Q.\mb{max}$.}

\State {Insert $e$ into $Q$, and mark if last record in block. }

\EndIf

\EndFor

\EndFunction

\end{algorithmic}
\end{algorithm}

Each merge proceeds in a sequence of rounds, where a round is one
iteration of the {\bf while} loop starting on line~\ref{line:while}.  During
each round we maintain a priority queue within the primary memory.
Because operations within the primary memory are free in the model, this
can just be kept as a sorted array of records, or even unsorted,
although a balanced search tree can be a feasible solution in practice.  Each
round consists of two phases. The first phase (the {\bf for} loop on
line~\ref{line:for}) considers each of the $l$ input subarrays in turn,
loading the current block for the subarray into the load buffer,
and then inserting each record $e$ from the block into
the priority queue if not already written to the output (i.e. $e.key >
\mb{lastV}$), and if smaller than the maximum in the queue (i.e. $e.key <
Q.\mb{max}$).  This might bump an existing element out of the queue.  Also,
if a record is the last in its block then it is marked and tagged with its
subarray number.

The second phase (the {\bf while} loop starting on line~\ref{line:while2})
starts writing the priority queue to the output one block at a time.
Whenever reaching a record that is marked as the last in its block,
the algorithm increments the pointer to the corresponding subarray
and processes the next block in the subarray.
We repeat the rounds until all records from
all subarrays have been processed.

To account for the space for the pointers $I=I_1,\ldots,I_l$, let $\alpha =
(\log n)/{s}$, where $s$ is the size of a record in bits, and $n$
is the total number of records being merged.  The cost of the merge is
bounded as follows:

\begin{lemma}\label{lem:merge}
  $l = \branchfactor{}M/B$ sorted sequences with total size $n$ (stored in
  $\lceil n/B \rceil$ blocks, and block aligned) can be merged using
  at most $(\branchfactor{} + 1) \lceil n/B \rceil$ reads and $\lceil n/B
  \rceil$ writes, on the AEM model with primary memory size
  $\left(M+2B+2\alpha\branchfactor M/B\right)$.
\end{lemma}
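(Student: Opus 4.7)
The plan is to bound three quantities separately: the primary memory consumed, the number of block writes, and the number of block reads.

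\textbf{Space and writes.} For the memory budget I would just add up the pieces: $M$ for $Q$, $2B$ for the load and store buffers, and $\alpha l = \alpha \branchfactor M/B$ locations for the pointer array $I_1,\ldots,I_l$ (each pointer is $\log n$ bits, i.e., $\alpha$ records). The remaining $\alpha \branchfactor M/B$ of the stated budget covers auxiliary per-subarray state (end-of-subarray indices used by \textsc{Process-Block}, $\mb{subarray}$ tags on marked records, etc.). Writes are easy: the only blocks ever sent to secondary memory are output blocks produced by flushing the store buffer, and since the total output is $n$ records this gives at most $\lceil n/B \rceil$ writes.

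\textbf{Reads.} The key claim is that each iteration of the outer while loop (a \emph{round}) writes out at least $M$ records, except possibly the last. In the for loop of a round the algorithm examines the current front block of each of the $l$ subarrays, a total of $lB = \branchfactor M$ candidate records, all strictly greater than $\mb{lastV}$ by the sortedness of the subarrays. Since $Q$ has capacity $M < lB$, by the end of the for loop $Q$ is full with exactly the $M$ smallest candidates; the while loop subsequently empties $Q$, so the round outputs $\geq M$ records. Hence the number of rounds $r$ is at most $\lceil n/M \rceil + 1$.

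Given this I would split block reads into two disjoint classes. For-loop reads total at most $l r \leq (\branchfactor M/B)(n/M + 1) = \branchfactor n/B + l$. While-loop reads happen only when a pointer $I_i$ advances to a new (non-past-the-end) block; each of the $l$ subarrays has exactly one final past-the-end advancement that reads nothing, so while-loop reads equal $n/B - l$ (equivalently, every block except the $l$ initial fronts has its ``first read'' in the while loop). Adding the two classes, the $+l$ and $-l$ cancel, yielding total reads $\leq (\branchfactor + 1) n/B \leq (\branchfactor + 1) \lceil n/B \rceil$.

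\textbf{Main obstacle.} The delicate step is the per-round lower bound of $M$ output records, which rests on the invariant that the records in the current front blocks are always strictly greater than $\mb{lastV}$ (so the $e > \mb{lastV}$ test drops none of the $lB$ candidates and $Q$ fills to capacity). The rest is accounting, and the bound comes out tight precisely because $l = \branchfactor M/B$: the extra $l$ for-loop reads charged for the possibly underfull final round cancel exactly against the $l$ missing while-loop reads from the past-the-end advancements.
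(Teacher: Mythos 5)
Your decomposition of reads into for-loop and while-loop reads, and the round-count bookkeeping, follows the paper's approach; the explicit $\pm l$ cancellation is a nice way to make rigorous what the paper hand-waves as ``the last round can be included $\ldots$ since it only loads as many blocks as are output.'' The write and space accounting also match the paper.

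There is, however, a genuine flaw in your justification of the per-round lower bound. You assert that the $lB$ records examined in the for loop are ``all strictly greater than $\mb{lastV}$ by the sortedness of the subarrays.'' This is not true: the pointer $I_i$ advances only when the \emph{last} record of its current block is dequeued, so the front block of a subarray will in general contain records that have already been written to the output (hence with key $\leq \mb{lastV}$) while that block remains ``front'' because its last record is still pending. Sortedness within a subarray does not prevent this, so the number of fresh candidates (records with key $> \mb{lastV}$) among those $lB$ records may be far below $lB$, and could a priori be below $M$. The inference that $Q$ fills to capacity at the end of the for loop therefore does not follow from the argument you give. The paper itself simply asserts ``each round (except perhaps the last) outputs at least $M$ records'' without proof, so you went beyond the paper in attempting a justification---but that attempt needs replacing. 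A correct argument would have to reason about which unoutput records can fail to be in $Q$ after the for loop (those filtered out because they exceeded $Q.\mb{max}$, and hence are larger than everything retained), not about the front-block contents relative to $\mb{lastV}$.
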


\begin{proof}
Each round (except perhaps the last) outputs at least $M$ records, and
hence the total number of rounds is at most $\lceil n/M \rceil$.  The
first phase of each round requires at most $\branchfactor{}M/B$ reads, so the
total number of reads across all the first phases is at most
$\branchfactor{}\lceil n/B \rceil$ (the last round can be included in this
since it only loads as many blocks as are output).  For the second
phase, a block is only read when incrementing its pointer, therefore
every block is only read once in the second phase.  Also every record is
only written once.  This gives the stated bounds on the number of
reads and writes.  The space includes the space for the in-memory heap
($M$), the load and store buffers, the pointers $I$ ($\alpha
\branchfactor M/B$), and pointers to maintain the last-record in block
information ($\alpha \branchfactor M/B$).
\end{proof}

We note that it is also possible to keep $I$ in secondary memory.
This will double the number of writes because every time the algorithm
moves to a new block in an input array $i$, it would need to write out
the updated $I_i$.  The increase in reads is small.  Also, if one uses
a balanced search tree to implement the priority queue $Q$ then
the size increases by $< M (\log M)/s$ in order to store the pointers
in the tree.

For the base case when $n \leq \branchfactor{}M <\wcost{}M$ we use the following lemma.

\begin{lemma}\label{lem:selsort}
  $n \leq \branchfactor{}M$ records stored in $\lceil n/B \rceil$ blocks can
  be sorted using at most $\branchfactor{} \lceil n/B
  \rceil$ reads and $\lceil n/B \rceil$ writes, on the AEM model with primary
  memory size $M + B$.
\end{lemma}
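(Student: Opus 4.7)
The plan is to adapt the selection--style sort implicit in Algorithm~\ref{alg:mergesort} to the degenerate ``single subarray'' case, i.e., run $k$ passes over the input, each producing the next $M$ records of the output in sorted order. Since $n\le kM$, we have $\lceil n/M\rceil\le k$ passes, and each pass reads the whole input once while writing $O(M/B)$ output blocks, giving exactly the claimed tallies.

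Concretely, I would maintain in primary memory a priority queue $Q$ of capacity $M$ plus a single buffer of size $B$ (used alternately for loading and flushing), and a variable $\mb{lastV}$ initialized to $-\infty$. For pass $j=1,2,\ldots,\lceil n/M\rceil$, empty $Q$, then scan the $\lceil n/B\rceil$ blocks of the input sequentially: for each record $e$ in the loaded block, if $e.\text{key}>\mb{lastV}$ and either $|Q|<M$ or $e.\text{key}<Q.\mb{max}$, insert $e$ into $Q$ (ejecting $Q.\mb{max}$ when $Q$ is already full). After the scan, $Q$ holds the next $M$ records in sorted order; flush them block by block to the output through the $B$-buffer, set $\mb{lastV}$ to the last record written, and proceed to the next pass.

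For correctness I would argue by induction on $j$ that after pass $j$ the output contains, in sorted order, exactly the $jM$ smallest records (or all remaining records in the final pass): the invariant is that $Q$ always holds the $M$ smallest records with key strictly greater than $\mb{lastV}$ seen so far in the current scan, which is maintained by the eject-max rule. For the resource bounds, each pass incurs at most $\lceil n/B\rceil$ reads (one full scan) and contributes $M/B$ writes (except the last pass which contributes fewer), so the total is at most $k\lceil n/B\rceil$ reads and exactly $\lceil n/B\rceil$ writes since every output record is written exactly once. The memory footprint is $M$ (for $Q$) plus $B$ (for the shared load/store buffer), matching the stated bound.

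The only real subtlety---and the step I would take care with---is verifying that a single $B$-sized buffer suffices for both loading input blocks and flushing output blocks within a pass. This works because the two activities are temporally disjoint: all loads for a pass finish before any output block for that pass is flushed. The remaining work (rounding in the final pass, and the constant-overhead bookkeeping for $\mb{lastV}$ and $Q.\mb{max}$) is routine and introduces no additional block transfers.
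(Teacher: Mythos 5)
Your proof is correct and takes essentially the same approach as the paper: a $\branchfactor$-pass selection sort that keeps the $M$ smallest records above the running threshold in primary memory, flushes them in sorted order, and updates the threshold. The only cosmetic difference is your reuse of a single $B$-block buffer for both loading and flushing; the paper instead writes output blocks directly from the $M$-record in-memory store and reserves the extra block for input only, but both accountings land on the same $M+B$ bound.
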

\begin{proof}
  We sort the elements using a variant of selection sort, scanning the
  input list a total of at most $\branchfactor{}$ times.  In the first scan,
  store in memory the $M$ smallest elements seen so far, performing no
  writes and $\lceil n/B \rceil$ reads.  After completing the scan,
  output all the $\min(M,n)$ elements in sorted order using
  $\lceil\min(M,n)/B\rceil$ writes.  Record the maximum element written so far.
  In each subsequent phase (if not finished), store in memory the $M$
  smallest records larger than the maximum written so far, then output
  as before.  The cost is $\lceil n/B \rceil$ reads and $M/B$ writes
  per phase (except perhaps the last phase).  We need one extra block
  to hold the input.  The largest output can be stored in the
  $O\left(\log M\right)$ locations we have allowed for in the model.
  This gives the stated bounds because every element is written out once
  and the input is scanned at most $\wcost$ times.
\end{proof}

Together we have:

\begin{theorem}\label{lem:mergesort}
  Algorithm~\ref{alg:mergesort} sorts $n$ records
  using
\[R\left(n\right) \leq \left(\branchfactor{}+1\right)\left\lceil \frac{n}{B}\right\rceil\left\lceil\log_{\frac{\branchfactor{}M}{B}}\left(\frac{n}{B}\right)\right\rceil\]
  reads, and
\[W\left(n\right) \leq  \left\lceil\frac{n}{B}\right\rceil\left\lceil\log_{\frac{\branchfactor{}M}{B}}\left(\frac{n}{B}\right)\right\rceil\]
writes on an AEM with primary memory size $\left(M+2B+2\alpha\branchfactor M/B\right)$. 
\end{theorem}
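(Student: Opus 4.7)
The plan is a level-by-level charging argument on the recursion tree of Algorithm~\ref{alg:mergesort}. First I would count levels: since each recursive call partitions its input into $l=\branchfactor M/B$ subarrays and recursion bottoms out when a subarray has size at most $\branchfactor M$, the depth of internal recursion is $\lceil\log_{\branchfactor M/B}(n/(\branchfactor M))\rceil$. Counting the base-case sort itself as one additional level of work gives a total of $\lceil\log_{\branchfactor M/B}(n/B)\rceil$ levels, which is exactly the logarithmic factor appearing in $R(n)$ and $W(n)$.

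Next I would bound the total reads and writes performed at a single level of the tree. At any fixed level, the subproblems form a partition of the input into contiguous block-aligned ranges, so the sum of their sizes (in blocks) is $\lceil n/B\rceil$. At an internal level, summing Lemma~\ref{lem:merge} over all merges at that level yields at most $(\branchfactor+1)\lceil n/B\rceil$ reads and $\lceil n/B\rceil$ writes. At the base-case level, summing Lemma~\ref{lem:selsort} similarly gives at most $\branchfactor\lceil n/B\rceil\le(\branchfactor+1)\lceil n/B\rceil$ reads and $\lceil n/B\rceil$ writes. Multiplying the per-level bounds by the number of levels gives the stated $R(n)$ and $W(n)$.

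For the memory requirement, I would observe that every merge invokes the hypothesis of Lemma~\ref{lem:merge}, which demands primary memory $M+2B+2\alpha\branchfactor M/B$; this dominates the $M+B$ requirement of the base case (Lemma~\ref{lem:selsort}). Since recursive subproblems execute one at a time, the running footprint at any moment is that of a single subproblem, so $M+2B+2\alpha\branchfactor M/B$ suffices throughout.

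The main subtle point—still routine—is that summing $\lceil n_j/B\rceil$ over the subarrays at a level can in principle accrue extra $+1$ ceiling terms that would spoil the clean per-level bound. This is avoided by the block-aligned partitioning prescribed in step~3 of Algorithm~\ref{alg:mergesort}: since every recursive subarray starts and ends on a block boundary, its block count is exact (not a ceiling), and the block counts at any level sum to exactly $\lceil n/B\rceil$. Only the top-level input contributes a single ceiling, which is already present in the theorem statement.
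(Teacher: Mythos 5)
Your proof is correct and follows essentially the same approach as the paper's: count $\lceil\log_{\branchfactor M/B}(n/B)\rceil$ levels (internal merges plus the base-case round), bound each level by $(\branchfactor+1)\lceil n/B\rceil$ reads and $\lceil n/B\rceil$ writes via Lemmas~\ref{lem:merge} and~\ref{lem:selsort}, and multiply. Your explicit treatment of the ceiling/block-alignment point elaborates on what the paper dismisses with the remark ``only one block on each level might not be full,'' but the underlying argument is identical.
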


\begin{proof}
The number of recursive levels of merging is bounded by
$\left\lceil\log_{\frac{\branchfactor{}M}{B}}\left(\frac{n}{\branchfactor{}M}\right)\right\rceil$,
and when we add the additional base round we have $1 +
\left\lceil\log_{\frac{\branchfactor{}M}{B}}\left(\frac{n}{\branchfactor{}M}\right)\right\rceil
=
\left\lceil\log_{\frac{\branchfactor{}M}{B}}\left(\frac{n}{\branchfactor{}M}\frac{\branchfactor{}M}{B}\right)\right\rceil
=
\left\lceil\log_{\frac{\branchfactor{}M}{B}}\left(\frac{n}{B}\right)\right\rceil$.
The cost for each level is at most $(\branchfactor + 1) \lceil
n/B \rceil$ reads and $\lceil n/B \rceil$ writes (only one block on
each level might not be full).
\end{proof}

\begin{corollary}
Assuming that $n$ is big enough so we ignore the ceiling function in number of levels for merging, then picking $\branchfactor$ such that $\displaystyle{\branchfactor \over \log\branchfactor}<{\wcost\over \log{M \over B}}$ gives an overall improvement of the total I/O complexity (reads and writes).
\end{corollary}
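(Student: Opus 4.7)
The plan is to substitute the read and write bounds from Theorem~\ref{lem:mergesort} into the weighted I/O complexity $R(n) + \wcost{} W(n)$ and compare this against the standard EM sorting cost of Equation~\ref{eqn:sortbound}, which in the asymmetric model contributes $(1+\wcost{})\cdot \frac{n}{B}\log_{M/B}\frac{n}{B}$. Dropping the outer ceiling (as the hypothesis permits) and applying change of base, the weighted cost of Algorithm~\ref{alg:mergesort} becomes
\[
T(\branchfactor{}) \;=\; \frac{\branchfactor{} + 1 + \wcost{}}{\log(\branchfactor{} M/B)}\cdot \frac{n}{B}\log\!\left(\frac{n}{B}\right),
\]
so the corollary reduces to showing the scalar inequality
\[
\frac{\branchfactor{}+1+\wcost{}}{\log(\branchfactor{} M/B)} \;<\; \frac{1+\wcost{}}{\log(M/B)}
\]
whenever $\branchfactor{}/\log\branchfactor{} < \wcost{}/\log(M/B)$.

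Next I would set $L := \log(M/B)$ so that $\log(\branchfactor{} M/B) = L + \log\branchfactor{}$, and cross-multiply (both denominators are positive by the tall-cache assumption and $\branchfactor{}\ge 1$). Expanding the right-hand side $(1+\wcost{})(L + \log\branchfactor{})$ and cancelling the common $(1+\wcost{})L$ contribution collapses the inequality to
\[
\branchfactor{}\, L \;<\; (1+\wcost{})\,\log\branchfactor{},
\]
i.e.\ $\branchfactor{}/\log\branchfactor{} < (1+\wcost{})/L$. Since $\wcost{} < 1+\wcost{}$, the hypothesis $\branchfactor{}/\log\branchfactor{} < \wcost{}/L$ is strictly stronger and therefore implies the required inequality, yielding the claimed improvement.

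The only mild subtlety is that this argument only becomes clean after the ceilings in Theorem~\ref{lem:mergesort} have been dropped: retaining them could contribute an additive term of size $\Theta(n/B)$ whose presence would dominate the trade-off for small inputs and would obscure the per-level saving that the condition is designed to capture. The corollary explicitly asks us to take $n$ large enough to ignore this effect; once that assumption is in place, the rest is the one-line algebraic manipulation above, so there is no serious obstacle beyond verifying the correct baseline.
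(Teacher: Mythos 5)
Your proof is correct and follows essentially the same route as the paper's appendix: substitute the bounds from Theorem~\ref{lem:mergesort} into $R(n)+\wcost W(n)$, compare against the $(1+\wcost)\frac{n}{B}\log_{M/B}\frac{n}{B}$ baseline, drop the ceilings, and cross-multiply to obtain the exact threshold $\branchfactor/\log\branchfactor < (1+\wcost)/\log(M/B)$, of which the stated condition is a sufficient (strictly stronger) special case. The one small point where you are slightly cleaner than the paper is in observing directly that $\wcost<1+\wcost$ makes the stated condition imply the exact one, rather than appealing to the $\wcost\gg 1$ approximation the appendix uses.
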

The details of computing this can be found in appendix~\ref{app:branchfactor}.  In real-world applications for sorting, $\left\lceil\log_{\frac{M}{B}}\left(\frac{n}{B}\right)\right\rceil$ is usually a small constant $p$ (between $2$ to $6$). In this case, we can always try all $\branchfactor=\lceil\left(\frac{n}{B}\right)^{1/p'}/\frac{M}{B}\rceil$ for all integer $p'$ and $1\leq p'\leq p$, and pick the appropriate $\branchfactor$ that gives the minimum total I/O complexity.  Our new mergesort will never perform worse than the classic $M/B$-way mergesort, since we always have the option to pick $\branchfactor$ as $1$, and the new algorithm will performe exactly the same as the classic EM mergesort. 
\subsection{Sample Sort}
\label{sec:io-sample-sort}

We now describe an $l$-way randomized sample
sort~\cite{Blelloch91,Frazer70} (also called distribution sort), which
asymptotically matches the I/O bounds of the mergesort.  The idea of
sample sort is to partition $n$ records into $l$ approximately equally
sized buckets based on a sample of the keys within the records, and
then recurse on each bucket until an appropriately-sized base case is
reached.  Similar to the mergesort, here we use a branching factor
$l = \branchfactor{}M/B$, where $1\leq\branchfactor=\wcost$.  
Again this branching factor reduces the number
of levels of recursion relative to the standard EM sample sort which
uses $l = M/B$~\cite{AggarwalV88}, and the analysis of picking appropriate $\branchfactor$ in previous section also holds here.  We describe how to process each
partition and the base case.

The partitioning starts by selecting a set of splitters.  This can be
done using standard techniques, which we review later.  The splitters
partition the input into buckets that w.h.p.~are within
a constant factor of the average size $n/l$.  The algorithm now needs
to bucket the input based on the splitters.  The algorithm processes
the splitters in $\branchfactor{}$ rounds of size $M/B$ each, starting with
the first $M/B$ splitters.  For each round the algorithm scans the
whole input array, partitioning each value into the one of $M/B$
buckets associated with the splitters, or skipping a record if its key
does not belong in the current buckets.  One block for each bucket is
kept in memory.  Whenever a block for one of the buckets is full, it
is written out to memory and the next block is started for that
bucket.  Each $\branchfactor$ rounds reads all of the input and writes out
only the elements associated with these buckets (roughly a $1/\branchfactor$
fraction of the input).

The base case occurs when $n\leq \branchfactor{}M$, at which point we apply
the selection sort from Lemma~\ref{lem:selsort}.

Let $n_0$ be the original input size.
The splitters can be chosen by randomly picking a sample of keys of
size $m = \Theta(l \log n_0)$, sorting them, and then sub-selecting the
keys at positions $m/l, 2m/l, \ldots, (l-1)m/l$.  By selecting the
constant in the $\Theta$ sufficiently large, this process ensures
that, w.h.p., every bucket is within a constant factor
of the average size~\cite{Blelloch91}.  To sort the samples apply a
RAM mergesort, which requires at most
$O\left(\left(\left(l \log n_0\right)/B\right) \log \left(l \log
    n_0/M\right)\right)$
reads and writes.  This is a lower-order term when
$l = O(n\,/\log^2 n)$, but unfortunately this bound on $l$ may
not hold for small subproblems.  There is a simple solution---when
$n\leq \branchfactor{}^2M^2/B$, instead use $l=n/(\branchfactor{}M)$.  With this
modification, we always have $l \leq \sqrt{n/B}$.

It is likely that the splitters could also be selected
deterministically using an approach used in the original I/O-efficient
distribution sort~\cite{AggarwalV88}.

\begin{theorem}
  The $\branchfactor{}M/B$-way sample sort sorts $n$ records using, w.h.p.,
\[R\left(n\right) = O\left(\frac{\branchfactor{}n}{B} \left\lceil\log_{\frac{\branchfactor{}M}{B}}\left(\frac{n}{B}\right)\right\rceil\right)\]
  reads, and
\[W\left(n\right) = O\left(\frac{n}{B}\left\lceil\log_{\frac{\branchfactor{}M}{B}}\left(\frac{n}{B}\right)\right\rceil\right)\]
writes on an AEM with primary memory size $\left(M+B+M/B\right)$.
\end{theorem}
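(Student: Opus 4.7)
The plan is to compute the read/write cost incurred at a single level of recursion and then multiply by the recursion depth, with a separate (lower-order) accounting for splitter selection and a base-case argument via Lemma~\ref{lem:selsort}.

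At one level of the recursion on a subproblem of size $n'$, the algorithm processes the $l = \branchfactor M/B$ splitters in $\branchfactor$ rounds of $M/B$ splitters each. Each round scans the entire subarray, costing $\lceil n'/B\rceil$ reads, and writes only those records whose key falls in the $M/B$ buckets active in that round; since each record is written to exactly one bucket across the whole partitioning, the writes sum to $\lceil n'/B\rceil+ M/B$ per level (the $M/B$ term coming from partially full output blocks, one per bucket across the $\branchfactor$ rounds). Thus one level costs $O(\branchfactor n'/B)$ reads and $O(n'/B + \branchfactor M/B) = O(n'/B)$ writes, provided $n' \geq \branchfactor M$ (otherwise we apply the base case directly). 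The memory used per round holds $M/B$ active output blocks plus one input block, matching the stated $M+B$ primary-memory budget (with $M/B$ left over for bookkeeping pointers).

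For the splitter selection I would invoke the standard oversampling result~\cite{Blelloch91}: a random sample of size $\Theta(l\log n_0)$ sorted by any convenient in-memory or RAM mergesort yields splitters such that, w.h.p., every bucket has size within a constant factor of $n'/l$. When $n' \leq \branchfactor^2 M^2/B$ we reduce to $l = n'/(\branchfactor M)$, which keeps $l \leq \sqrt{n'/B}$ so that sorting the sample of size $O(l\log n_0)$ uses only $O((l\log n_0/B)\log(l\log n_0/M))$ I/Os, a lower-order term compared to $\branchfactor n'/B$. A union bound over the at most $O(n_0/B)$ recursive subproblems keeps the balance property w.h.p.~at every level.

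Next I would sum costs across levels. Because each recursive subproblem shrinks by a factor of $\Theta(\branchfactor M/B)$ w.h.p., and because the total size across all subproblems at a fixed level is exactly $n$, the per-level read cost sums to $O(\branchfactor n/B)$ and the per-level write cost to $O(n/B)$. The number of levels of recursion before hitting the base case $n' \leq \branchfactor M$ is $O(\log_{\branchfactor M/B}(n/(\branchfactor M)))$; adding one level for the base case via Lemma~\ref{lem:selsort} gives a total of $\lceil \log_{\branchfactor M/B}(n/B) \rceil$ levels, each of whose base-case invocations together use $O(\branchfactor n/B)$ reads and $O(n/B)$ writes by the same telescoping argument. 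Multiplying yields the claimed bounds.

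The main obstacle will be the careful constant-factor accounting across levels under the w.h.p.~balance guarantee: one must verify that the total subproblem size at each level remains $n$ (not growing, as could happen with a loose analysis of unevenly sized buckets), and that the switch in $l$ for small subproblems does not change the asymptotic number of levels. Everything else is routine bookkeeping given Lemma~\ref{lem:selsort} and the oversampling lemma.
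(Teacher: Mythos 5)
Your proposal follows the same structure as the paper's proof sketch: per-level cost accounting ($O(\branchfactor n'/B)$ reads and $O(n'/B)$ writes per level because writes pay $\lceil n'/B\rceil + \branchfactor M/B$ for the partial blocks, with $n' \ge \branchfactor M$ making the second term lower order), lower-order-term accounting of splitter selection via the cap $l \le \sqrt{n'/B}$, multiplication by the $\lceil\log_{\branchfactor M/B}(n/B)\rceil$ recursion depth, and a final base-case pass with Lemma~\ref{lem:selsort}. Two small slips worth flagging, neither fatal: you write ``$M/B$'' for the partial-block write overhead per level before correcting to $\branchfactor M/B$ in the next line, and you attribute the spare $M/B$ of primary memory to ``bookkeeping pointers'' whereas it is actually used to hold the $M/B$ splitters active in a round; also, the paper explicitly notes that leaves have size $\Omega(B)$ (so there are $O(n/B)$ of them), which you need to keep the per-leaf ceiling terms from dominating the base-case total, a detail your ``same telescoping argument'' glosses over.
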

\begin{proof} (Sketch) The primary-memory size allows one block from
  each bucket as well as the $M/B$ splitters to remain in memory.  Each
  partitioning step thus requires $\lceil n/B \rceil + \branchfactor{}M/B$
  writes, where the second term arises from the fact that each bucket
  may use a partial block.  Since $n \geq \branchfactor{}M$ (this is not a
  base case), the cost of each partitioning step becomes $O(n/B)$
  writes and $O(\branchfactor{}n/B)$ reads.
  Because the number of splitters is at most
  $\sqrt{n} = O(n/\log^2n)$, choosing and sorting the splitters
  takes $O(n/B)$ reads and writes.
  Observe that the recursive structure matches that of a sample sort with
  an effective memory of size $\branchfactor{}M$, and that there will be at most two
  rounds at the end where $l = n/(\branchfactor M)$.  As in standard sample
  sort, the number of writes is linear with the size of the subproblem, but
  here the number of reads is multiplied by a factor of $\branchfactor{}$. The
  standard samplesort analysis thus applies, implying the bound
  stated.

  It remains only to consider the base case.  Because all buckets are
  approximately the same size, the total number of leaves is
  $O(n/B)$---during the recursion, a size $n > \branchfactor{}M$ problem is split into
  subproblems whose sizes are $\Omega(B)$.  Applying
  Lemma~\ref{lem:selsort} to all leaves, we get a cost of $O(\branchfactor n
  / B)$ reads and $O(n/B)$ writes for all base cases.
\end{proof}

\myparagraph{Extensions for the Private-Cache Model}
The above can be readily parallelized.  Here we outline the approach.
We assume that there are $p = n/M$ processors.  We use parallelism
both within each partition, and across the recursive partitions.
Within a partition we first find the $l$ splitters in parallel.
(As above, $l = \branchfactor M/B$ except for the at most two rounds prior
to the base case where $l = n/(\branchfactor M)$.)
This can be done on a sample that is a logarithmic factor
smaller than the partition size, using a less efficient sorting algorithm
such as parallel mergesort, and then sub-selecting $l$ splitters from
the sorted order.
This requires $O(\branchfactor(M/B + \log^2 n))$ time, where the second term
($O(\branchfactor\log^2 n)$) is the depth of the parallel mergesort, and the
first term is the work term $O((\branchfactor/B)((n/\log n)\log n)/P) =
O(\branchfactor M/B)$.

The algorithm groups the input into $n/(\branchfactor M)$ chunks of size
$\branchfactor M$ each.  As before we also group the splitters into $\branchfactor$
rounds of size $M/B$ each.  Now in parallel across all chunks and
across all rounds, partition the chunk based on the round.  We have
$n/(\branchfactor M) \times \branchfactor = n/M$ processors so we can do them all in
parallel.  Each will require $\branchfactor M$ reads and $M$ writes.  To
ensure that the chunks write their buckets to adjacent locations (so
that the output of each bucket is contiguous) we will need to do a
pass over the input to count the size of each bucket for each chunk,
followed by a prefix sum.  This can be done before processing the
chunks and is a lower-order term.  The time for the computation is
$O(\branchfactor M/B)$.

The processors are then divided among the sub-problems proportional to
the size of the sub-problem, and we repeat.  The work at each level of
recursion remains the same, so the time at each level remains the
same.  For the base case of size $\leq \branchfactor M$, instead of using a
selection sort across all keys, which is sequential, we find $\branchfactor$
splitters and divide the work among $\branchfactor$ processors to sub-select
their part of the input, each by reading the whole input, and then
sorting their part of size $O(M)$ using a selection sort on those
keys.  This again takes $O(\branchfactor M/B)$ time.  The total time for the
algorithm is therefore:
\[O\left(\branchfactor\left({M\over B} + \log^2 n\right) \left\lceil 1 + \log_{\frac{\branchfactor{}M}{B}}\left(\frac{n}{\branchfactor{}M}\right)\right\rceil\right)\]
with high probability.    This is linear speedup assuming $\frac{M}{B}
\geq \log^2 n$.  Otherwise the number of processors can be reduced to
maintain linear speedup.

\subsection{I/O Buffer Trees}\label{sec:io-buffer-trees}

This section describes how to augment the basic buffer
tree~\cite{Arge03} to build a priority queue that supports $n$
\textsc{Insert} and \textsc{Delete-Min} operations with an amortized
cost of $O((\branchfactor{}/B)(1+\log_{\branchfactor{}M/B} n))$ reads
and $O((1/B)(1+ \log_{\branchfactor{}M/B} n ))$ writes per
operation.  Using the priority queue to implement a sorting algorithm
trivially results in a sort costing a total of
$O((\branchfactor{}n/B)(1+ \log_{\branchfactor{}M/B} n))$ reads and $O((n/B)(1+
\log_{\branchfactor{}M/B} n))$ writes.
Here $\branchfactor$ is also the extra branching factor to reduce writes (at a cost of more reads), and can be chosen using similar argument in two previous sorting algorithms.
These
bounds asymptotically match the preceding sorting algorithms, but some
additional constant factors are introduced because a buffer tree is a
dynamic data structure.

Our buffer tree-based priority queue for the AEM contains a few differences from
the regular EM buffer tree~\cite{Arge03}: (1) the buffer tree nodes are
larger by a factor $\branchfactor$, (2) consequently, the ``buffer-emptying''
process uses an efficient sort on $\branchfactor{}M$ elements instead of an
in-memory sort on $M$ elements, and (3) to support the priority queue,
$O(\branchfactor{}M)$ elements are stored outside the buffer tree instead of $O(M)$,
which adds nontrivial changes to the data structure.

\subsubsection{Overview of a buffer tree}

A buffer tree~\cite{Arge03} is an augmented version of an
$(a,b)$-tree~\cite{HuddlestonMe82}, where $a=l/4$ and $b=l$ for large
branching factor $l$.  In the original buffer tree $l=M/B$, but to
reduce the number of writes we instead set $l=\branchfactor{}M/B$.  As an
$(a,b)$ tree, all leaves are at the same depth in the tree, and all
internal nodes have between $l/4$ and $l$ children (except the root,
which may have fewer).  Thus the height of the tree is $O(1+\log_l
n)$.  An internal node with $c$ children contains $c-1$ keys, stored
in sorted order, that partition the elements in the subtrees.  The
structure of a buffer tree differs from that of an $(a,b)$ tree in two
ways.  Firstly, each leaf of the buffer tree contains between $lB/4$
and $lB$ elements stored in $l$ blocks.\footnote{Arge~\cite{Arge03}
  defines the ``leaves'' of a buffer tree to contain $\Theta(B)$
  elements instead of $\Theta(lB)$ elements.  Since the algorithm only
  operates on the parents of those ``leaves'', we find the terminology
  more convenient when flattening the bottom two levels of the
  tree. Our leaves thus correspond to what Arge terms ``leaf
  nodes''~\cite{Arge03} (not to be confused with leaves) or
  equivalently what Sitchinava and Zeh call ``fringe
  nodes''~\cite{SitchinavaZe12}.}  Secondly, each node in the buffer
tree also contains a dense unsorted list, called a \defn{buffer}, of
partially inserted elements that belong in that subtree.

We next summarize the basic buffer tree insertion
process~\cite{Arge03}.  Supporting general deletions is not much
harder, but to implement a priority queue we only need to support
deleting an entire leaf. The insertion algorithm proceeds in two
phases: the first phase moves elements down the tree through buffers,
and the second phase performs the $(a,b)$-tree rebalance operations
(i.e., splitting nodes that are too big).  The first phase begins by
appending the new element to the end of the root's buffer. We say that
a node is \defn{full} if its buffer contains at least $lB$ elements.  If
the insert causes the root to become full, then a \defn{buffer-emptying
  process} commences, whereby all of the elements in the node's buffer
are sorted then distributed to the children (appended to the ends of
their buffers).  This distribution process may cause children to
become full, in which case they must also be emptied.  More precisely,
the algorithm maintains a list of internal nodes with full buffers
(initially the root) and a separate list of leaves with full buffers.
The first phase operates by repeatedly extracting a full internal node
from the list, emptying its buffer, and adding any full children to
the list of full internal or leaf nodes, until there are no full internal
nodes.

Note that during the first phase, the buffers of full nodes may far
exceed $lB$, e.g., if all of the ancestors' buffer elements are
distributed to a single descendant.  Sorting the buffer from scratch
would therefore be too expensive.  Fortunately, each distribution
process writes elements to the child buffers in sorted order, so all
elements after the $lB$'th element (i.e., those written in the most
recent emptying of the parent) are sorted. It thus suffices to split
the buffer at the $lB$'th element and sort the first $lB$ elements,
resulting in a buffer that consists of two sorted lists.  These two
lists can trivially be merged as they are being distributed to the
sorted list of children in a linear number of I/O's.

When the first phase completes, there may be full leaves but no full
internal nodes. Moreover, all ancestors of each full leaf have empty
buffers.  The second phase operates on each full leaf one at a time.
First, the buffer is sorted as above and then merged with the elements
stored in the leaf.  If the leaf contains $X > lB$ elements, then a
sequence of $(a,b)$-tree rebalance operations occur whereby the leaf
may be split into $\Theta(X/(lB))$ new nodes.  These splits cascade up
the tree as in a typical $(a,b)$-tree insert.

\subsubsection{Buffer tree with fewer writes}

To reduce the number of writes, we set the branching factor of the
buffer tree to $l = \branchfactor{}M/B$ instead of $l=M/B$.  The consequence of this
increase is that the buffer emptying process needs to sort $lB=\branchfactor{}M$
elements, which cannot be done with an in-memory sort.  The
advantage is that the height of the tree reduces to
$O(1+\log_{\branchfactor{}M/B} n)$.

\begin{lemma}\label{lem:flushcost}
  It costs $O(\branchfactor{}X/B)$ reads and $O(X/B)$ writes to empty a full buffer
  containing $X$ elements using $\Theta(M)$ memory.
\end{lemma}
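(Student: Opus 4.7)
The plan is to mirror the original buffer-emptying procedure from Arge~\cite{Arge03}, but replace both the in-memory sort and the single-pass distribution with their write-efficient analogues. Recall from the overview above that when a buffer of size $X$ is emptied, its last $X - lB$ elements are already sorted (they were appended in the most recent emptying of the parent), while the first $lB = \branchfactor{}M$ elements are unsorted. Since the buffer is full we have $X \geq lB = \branchfactor{}M$, so $\branchfactor M / B = O(X/B)$ and $\branchfactor^2 M / B = O(\branchfactor X/B)$.

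First I would sort the unsorted prefix of $\branchfactor{}M$ elements using the selection-sort routine from Lemma~\ref{lem:selsort} with primary-memory size $M+B$. By that lemma this costs $\branchfactor \lceil \branchfactor M/B \rceil = O(\branchfactor^2 M/B) = O(\branchfactor X/B)$ reads and $O(\branchfactor M/B) = O(X/B)$ writes. At this point the buffer consists of two sorted runs in secondary memory, of total size $X$.

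Next I would distribute the elements to the $l = \branchfactor M / B$ children's buffers. The obstacle—the main difficulty of the whole proof—is that we cannot afford one output block in memory per child, since $l\cdot B = \branchfactor M$ far exceeds the $\Theta(M)$ budget. I resolve this exactly as in the mergesort and sample sort of Sections~\ref{sec:io-merge-sort}--\ref{sec:io-sample-sort}: partition the children into $\branchfactor$ groups of $M/B$ consecutive key-ranges, and handle one group per round. In each round, devote $M/B$ in-memory blocks to output buffers (one per child in the current group) and $O(B)$ blocks to input, then make a single streaming merge pass over the two sorted runs; for each merged element whose key falls into one of the current ranges, append it to the corresponding child's in-memory output block, flushing blocks to that child's external buffer when they fill, and skip all other elements. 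Each round reads the whole buffer once for $O(X/B)$ reads, and each element is written to some child in exactly one round, so the writes telescope to $O(X/B)$ total across all $\branchfactor$ rounds.

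Combining, the cost of emptying the buffer is $O(\branchfactor X/B)$ reads and $O(X/B)$ writes, all within $\Theta(M)$ memory, which is the claimed bound.
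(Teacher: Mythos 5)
Your proof is correct and achieves the claimed bounds, but it takes a different route through the distribute step than the paper does, based on a misidentified obstacle. You assert that the main difficulty is that one cannot keep $l = \branchfactor M/B$ output blocks (one per child) resident in memory, and you resolve it with the $\branchfactor$-round partitioning trick from the mergesort and sample sort, re-reading the entire buffer on every round. But that obstacle is not actually present here, precisely because the buffer has just been sorted: once the prefix is sorted via Lemma~\ref{lem:selsort}, the buffer consists of two sorted runs, which can be merged on the fly into a single sorted stream. Since the children of the node are also in key order, the sorted stream can be distributed in a \emph{single} pass by walking the children in key order while merging the two runs, writing to exactly one child at a time and flushing its (one) in-memory output block when it fills or when the scan moves on to the next child. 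This needs only $O(1)$ output blocks resident, not $l$ of them. The paper's distribute step therefore costs only $O(\branchfactor M/B + X/B) = O(X/B)$ reads and writes, versus your $O(\branchfactor X/B)$ reads. Both fit within the lemma's bound because full means $X \geq \branchfactor M$, so your proof is valid, but it is a constant factor $\branchfactor$ more expensive in reads than necessary for the distribution. The multi-round technique is really needed for the sample sort, where the input stream being partitioned is \emph{not} sorted; here the prior sort removes that need.
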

\begin{proof}
  By Lemma~\ref{lem:selsort},
  the cost of sorting the first $\branchfactor{}M$ elements is $O(\branchfactor{}^2M/B)$ reads and
  $O(\branchfactor{}M/B)$ writes.  The distribute step can be performed by
  simultaneously scanning the sorted list of children along with the
  two sorted pieces of the buffer, and outputting to the end of the
  appropriate child buffer.  A write occurs only when either finishing
  with a child or closing out a block.  The distribute step thus uses
  $O(\branchfactor{}M/B + X/B)$ reads and writes, giving a total of $O(\branchfactor{}^2M/B +
  X/B)$ reads and $O(\branchfactor{}M/B + X/B)$ writes including the sort step.
  Observing that full means $X > \branchfactor{}M$ completes the proof.
\end{proof}

\begin{theorem}\label{thm:buffertree}
  Suppose that the partially empty block belonging to the root's
  buffer is kept in memory.  Then the amortized cost of each insert
  into an $n$-element buffer tree is
  $O((\branchfactor{}/B)(1+\log_{\branchfactor{}M/B} n))$ reads and
  $O((1/B)(1+\log_{\branchfactor{}M/B} n))$ writes.
\end{theorem}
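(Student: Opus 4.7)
The plan is to follow the standard amortized analysis of buffer trees~\cite{Arge03} (attributing the cost of each buffer-emptying to the elements that pass through it), substituting the asymmetric flush bound of Lemma~\ref{lem:flushcost} for the original. The three sources of I/O that must be accounted for are: (i) appending the element to the root buffer, (ii) the first phase moving elements down through buffers, and (iii) the second phase that rebalances an overflowing leaf and cascades splits up the tree.

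For (i), since the partially empty tail block of the root's buffer is assumed to stay resident in the primary memory, each insert incurs no I/O until that block fills.  Once per $B$ inserts the block is flushed to the secondary memory, contributing $O(1/B)$ amortized writes and zero reads per insert.  For (ii), I observe that the tree has height $O(1+\log_{\branchfactor{}M/B} n)$, so each element participates in at most that many buffer-emptying operations as it migrates from the root to its destination leaf.  By Lemma~\ref{lem:flushcost}, emptying a buffer holding $X \geq \branchfactor{}M$ elements costs $O(\branchfactor{}X/B)$ reads and $O(X/B)$ writes, which distributes to $O(\branchfactor{}/B)$ reads and $O(1/B)$ writes per element per level.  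Summing over all levels gives exactly the $O((\branchfactor{}/B)(1+\log_{\branchfactor{}M/B} n))$ reads and $O((1/B)(1+\log_{\branchfactor{}M/B} n))$ writes claimed in the theorem, for the downward motion alone.

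For (iii), I use the standard $(a,b)$-tree potential argument: a node with $l$ children that has just split (into two nodes of $l/2$ children each) cannot split again until $\Omega(l)$ new children have been inserted beneath it, so each internal split amortizes to $O(1/l)$ per insertion per level.  Each such split reads and rewrites $O(l) = O(\branchfactor{}M/B)$ blocks of routing information, giving $O(\branchfactor{}M/(lB)) = O(1/B)$ amortized writes and $O(1/B)$ amortized reads per insert per level; summed over the $O(1+\log_{\branchfactor{}M/B} n)$ levels this is absorbed by the bound of (ii).  A full leaf containing $X$ elements is split into $\Theta(X/(lB))$ new leaves of size $\Theta(lB)$, which requires scanning and writing only $O(X/B)$ blocks, and these costs amortize against the $\Omega(X)$ elements that were routed into that leaf by phase one and then charged in (ii).

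The main obstacle is verifying that the ``$+1$'' terms inside the $O(\cdot)$ in Lemma~\ref{lem:flushcost} do not spoil the per-element amortization: if a buffer being emptied contained fewer than $\branchfactor{}M$ elements, the $O(\branchfactor{}^2 M/B)$ sort cost could not be charged evenly to its elements.  This is handled by the invariant that the buffer-emptying process is only invoked on \emph{full} buffers (with $X \geq \branchfactor{}M$), exactly as in Lemma~\ref{lem:flushcost}; sub-full buffers are never sorted, so every invocation of the flush routine can charge $O(\branchfactor{}/B)$ reads and $O(1/B)$ writes to each of its $\Theta(X)$ elements.  Combining the three contributions yields the stated amortized bounds.
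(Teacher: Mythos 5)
Your overall structure matches the paper's proof exactly: parts (i) and (ii) — the root-buffer accounting and the charge of $O(\branchfactor/B)$ reads and $O(1/B)$ writes per element per level during buffer emptying, via Lemma~\ref{lem:flushcost} and the tree height — are precisely the paper's argument. Your closing observation that the flush lemma is only ever invoked on full buffers (so the fixed $O(\branchfactor^2 M/B)$ sorting cost can always be spread over $\Omega(\branchfactor M)$ elements) is also a correct and worthwhile check, which the paper leaves implicit.

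Part (iii) takes a genuinely different route — a per-level potential argument rather than the paper's direct appeal to Arge's Theorem 1, which states that the total number of node splits over all levels is $O(n/(lB)) = O(n/(\branchfactor M))$. That alternative route would be fine, but the accounting as written has a gap. You claim that a node cannot split again until $\Omega(l)$ new children have arrived, so the split rate is $O(1/l)$ ``per insertion per level,'' and then compute $O(\branchfactor M/B) \cdot O(1/l) = O(\branchfactor M/(lB)) = O(1/B)$. With $l = \branchfactor M/B$ the quantity $\branchfactor M/(lB)$ equals $O(1)$, not $O(1/B)$. And the rate $O(1/l)$ is a rate per \emph{child} insertion, not per \emph{element} insertion; a leaf split requires $\Theta(lB)$ element insertions, so the per-element split rate at the leaf level is $O(1/(lB))$ and is geometrically smaller at higher levels. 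Taken literally, your numbers give $O(1)$ per insert per level and hence $O(\log_{\branchfactor M/B} n)$ per insert — a factor of $B$ worse than the theorem — so the ``absorbed by (ii)'' step would fail. With the corrected rate $O(1/(lB))$ per element insert at the leaves (and the geometric decay above), the total rebalancing cost sums to $O(1/B)$ per insert over \emph{all} levels, matching the paper. The paper avoids this bookkeeping entirely by quoting Arge's $O(n/(lB))$ bound on total splits and multiplying by the $O(\branchfactor M/B)$ cost per split to get $O(n/B)$ total.
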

\begin{proof}
  This proof follows from Arge's buffer tree performance
  proofs~\cite{Arge03}, augmented with the above lemma. We first
  consider the cost of reading and writing the buffers. The last block
  of the root buffer need only be written when it becomes full, at
  which point the next block must be read, giving $O(1/B)$ reads and
  writes per insert.  Each element moves through buffers on a
  root-to-leaf path, so it may belong to $O(1+\log_{\branchfactor{}M/B} n)$ emptying
  processes.  According to Lemma~\ref{lem:flushcost}, emptying a full
  buffer costs $O(\branchfactor{}/B)$ reads and $O(1/B)$ writes per element.
  Multiplying these two gives an amortized cost per element matching
  the theorem.

  We next consider the cost of rebalancing operations.  Given the
  choice of $(a,b)$-tree parameters, the total number of node splits
  is $O(n/(lB))$~\cite[Theorem 1]{Arge03} which is $O(n/(\branchfactor{}M))$.  Each
  split is performed by scanning a constant number of nodes, yielding
  a cost of $O(\branchfactor{}M/B)$ reads and write per split, or $O(n/(\branchfactor{}M)\cdot
  \branchfactor{}M/B) = O(n/B)$ reads and writes in total or $O(1/B)$ per insert.
\end{proof}

\subsubsection{An efficient priority queue with fewer writes}

The main idea of Arge's buffer tree-based priority queue~\cite{Arge03}
is to store a working set of the $O(lB)$ smallest elements resident in
memory.  When inserting an element, first add it to the working set,
then evict the largest element from the working set (perhaps the one
just inserted) and insert it into the buffer tree.  To extract the
minimum, find it in the working set.  If the working set is empty,
remove the $\Theta(lB)$ smallest elements from the buffer tree and add
them to the working set.  In the standard buffer tree, $l=M/B$ and
hence operating on the working set is free because it fits entirely in
memory.  In our case, however, extra care is necessary to
maintain a working set that has size roughly $\branchfactor$ times larger.

Our AEM priority queue follows the same idea except the working set is
partitioned into two pieces, the alpha working set and beta working
set.  The \defn{alpha working set}, which is always resident in
memory, contains at most $M/4$ of the smallest elements in the
priority queue.  The \defn{beta working set} contains at most
$2\branchfactor{}M$ of the next smallest elements in the data structure,
stored in $O(\branchfactor{}M/B)$ blocks.
The motivation for having a beta working set is that during
\textsc{Delete-Min} operations, emptying elements directly from the
buffer tree whenever the
alpha working set is empty would be too expensive---having a beta
working set to stage larger batches of such elements leads
to better amortized bounds.
Coping with the interaction
between the alpha working set, the beta working set, and the buffer
tree, is the main complexity of our priority queue.  The beta working
set does not fit in memory, but we keep a constant number of blocks
from the beta working set and the buffer tree (specifically, the last
block of the root buffer) in memory.

We begin with a high-level description of the priority-queue
operations, with details of the beta working set deferred until later.
For now, it suffices to know that we keep the maximum key in the beta
working set in memory.  To insert a new element, first compare its key
against the maximums in the alpha and beta working set. Then insert it
into either the alpha working set, the beta working set, or the buffer
tree depending on the key comparisons.  If the alpha working set
exceeds maximum capacity of $M/4$ elements, move the largest element
to the beta working set.  If the beta working set hits its maximum
capacity of $2\branchfactor{}M$ elements, remove the largest $\branchfactor{}M$
elements and insert them into the buffer tree.

To delete the minimum from the priority queue, remove the smallest
element from the alpha working set. If the alpha working set is empty,
extract the $M/4$ smallest elements from the beta working set (details
to follow) and move them to the alpha working set. If the beta
working set is empty, perform a buffer emptying process on the
root-to-leftmost-leaf path in the buffer tree. Then delete the
leftmost leaf and move its contents to the beta working set.

\myparagraph{The beta working set} The main challenge is in
implementing the beta working set. An unsorted list or buffer allows
for efficient inserts by appending to the last block. The challenge,
however, is to extract the $\Theta(M)$ smallest elements with $O(M/B)$
writes---if $\branchfactor{}>B$, each element may reside in a separate block,
and we thus cannot afford to update those blocks when extracting the
elements. Instead, we perform the deletions implicitly.

To facilitate implicit deletions, we maintain a list of ordered pairs
$(i_1,x_1),(i_2,x_2),(i_3,x_3),\ldots$, where $(i,x)$ indicates that
all elements with index at most $i$ and key at most $x$ are
invalid.
Our algorithm maintains the invariant that for
consecutive list elements $(i_j,x_j)$ and $(i_{j+1},x_{j+1})$, we have
$i_j < i_{j+1}$ and $x_j > x_{j+1}$ (recall that all keys are distinct).

To insert an element to the beta working set, simply append it to the
end. The invariant is maintained because its index is larger than any
pair in the list.

To extract the minimum $M/4$ elements, scan from index $0$ to $i_1$ in
the beta working set, ignoring any elements with key at most $x_1$.
Then scan from $i_1+1$ to $i_2$, ignoring any element with key at most
$x_2$.  And so on.  While scanning, record in memory the $M/4$
smallest valid elements seen so far.  When finished, let $x$ be the
largest key and let $i$ be the length of the beta working set.  All
elements with key at most $x$ have been removed from the full beta
working set, so they should be implicitly marked as invalid. To
restore the invariant, truncate the list until the last pair
$(i_j,x_j)$ has $x_j > x$, then append $(i,x)$ to the list.  Because
the size of the beta working set is growing, $i_j < i$.  It should be
clear that truncation does not discard any information as $(i,x)$
subsumes any of the truncated pairs.

Whenever the beta working set grows too large ($2\branchfactor{}M$ valid elements)
or becomes too sparse ($\branchfactor$ extractions of $M/4$ elements each have
occurred), we first rebuild it.  Rebuilding scans the elements in
order, removing the invalid elements by packing the valid ones densely
into blocks.  Testing for validity is done as above.  When done, the
list of ordered pairs to test invalidity is cleared.

Finally, when the beta working set grows too large, we extract the
largest $\branchfactor{}M$ elements by sorting it (using the selection sort of
Lemma~\ref{lem:selsort}).

\myparagraph{Analyzing the priority queue}
We begin with some lemmas about the beta working set.

\begin{lemma}\label{lem:extract}
  Extracting the $M/4$ smallest valid elements from the beta working
  set and storing them in memory costs $O(\branchfactor{}M/B)$ reads and amortized
  $O(1)$ writes.
\end{lemma}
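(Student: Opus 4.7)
The plan is to bound reads and writes separately, both relying on the fact that the physical footprint of the beta working set is $O(\branchfactor{}M/B)$ blocks at any time, so a single scan suffices for extraction.

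First I would establish the footprint bound. The rebuild policy guarantees that the number of physical elements (valid plus invalid still resident in the layout) is $O(\branchfactor{}M)$: rebuilding is triggered either when the valid count exceeds $2\branchfactor{}M$ or after $\branchfactor{}$ extractions of $M/4$ elements each, and in the latter case the number of invalidated-but-still-resident elements is at most $\branchfactor{}\cdot M/4 = O(\branchfactor{}M)$. Combined with the valid-size bound, the working set is stored in $O(\branchfactor{}M/B)$ contiguous blocks between rebuilds.

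Next, for the read cost, the extraction algorithm performs exactly one linear scan from the front of the beta working set to its end, walking through the in-memory invalidity list $(i_1,x_1),(i_2,x_2),\ldots$ in lockstep with the position index to decide which key threshold $x_j$ to apply. Because the list has at most $\branchfactor{}+1$ pairs (its length is reset at each rebuild, which happens after at most $\branchfactor{}$ extractions) and each pair is only $O(\log(\branchfactor{}M))$ bits plus a key, it fits in memory by a wide margin and contributes no secondary memory reads. The scan therefore incurs $O(\branchfactor{}M/B)$ reads, and maintaining the running set of the $M/4$ smallest valid candidates in memory is free.

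For the write cost, the key observation is that the extracted elements are placed into the alpha working set, which lives in memory, so no output writes hit secondary memory. The only state that changes is the invalidity list: we truncate the trailing pairs that are dominated by the new threshold and append one new pair $(i,x)$. Because the list is kept in memory, both updates are free; even if one insists on materializing the list in secondary memory, each extraction performs at most one block write for the appended entry plus truncation, yielding amortized $O(1)$ writes per extraction via a standard block-amortization argument over $B$ consecutive appends.

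The main obstacle is the footprint bound, since without it the linear scan could read many more than $O(\branchfactor{}M/B)$ blocks. Tying the bound cleanly to the rebuild trigger (and being careful that insertions between extractions do not blow up the physical size beyond $O(\branchfactor{}M)$ before a rebuild fires) is the part of the argument that deserves the most care; the read and write accounting then follows directly.
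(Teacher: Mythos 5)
Your argument is essentially correct and follows the same scan-based structure as the paper's proof, but there are two places where it diverges in emphasis and one where it leans on an unstated assumption.

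On the positive side, you spend effort establishing explicitly that the physical footprint of the beta working set (valid plus not-yet-compacted invalid elements) remains $O(\branchfactor M/B)$ blocks between rebuilds. The paper treats this as part of the data-structure invariant (the beta working set is ``stored in $O(\branchfactor M/B)$ blocks'' by definition, with the rebuild triggers guaranteeing it), so the paper's proof does not belabor the point; your version surfaces this hidden invariant and ties it cleanly to both rebuild triggers, which is a genuine clarification.

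Where you deviate more substantively is in the handling of the invalidity list $(i_1,x_1),\ldots$. You assume the entire list is resident in primary memory, which makes both its reads and the truncate/append updates free. The paper instead scans the list from secondary memory, keeping only \emph{one pair} in memory at a time alongside one block of the working set, and pays $O(\branchfactor)$ extra reads for it (which is absorbed into the $O(\branchfactor M/B)$ read bound). The paper's choice is deliberate: it avoids carving out $\Theta(\branchfactor)$ words of primary memory for the list, which is not obviously available when $\branchfactor$ is large relative to the slack in $M$ (the alpha working set already uses $M/4$, and the model only guarantees an extra $O(\log M)$ scratch locations). Your in-memory assumption is probably harmless for realistic parameter ranges, but it is an assumption the paper takes care not to make, so you should either justify that $\branchfactor$ words fit or adopt the one-pair-at-a-time scan. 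Relatedly, your fallback ``block-amortization over $B$ consecutive appends'' for the write cost addresses appends but not truncations cleanly; the paper's cleaner accounting is to charge each truncated pair's deletion to the $O(1)$ write of its original append, which gives the stated amortized $O(1)$ writes without further argument.
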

\begin{proof}
  The extraction involves first performing read-only passes over the
  beta working set and list of pairs, keeping one block from the
  working set and one pair in memory at a time.  Because the working
  set is rebuilt after $\branchfactor$ extractions, the list of pairs can
  have at most $\branchfactor$ entries.  Even if the list is not I/O
  efficient, the cost of scanning both is $O(\branchfactor{}M/B+\branchfactor) =
  O(\branchfactor{}M/B)$ reads.  Next the list of pairs indicating invalid
  elements is updated.  Appending one new entry requires $O(1)$ writes.
  Truncating and deleting any old entries can be charged against their
  insertions.
\end{proof}

The proof of the following lemma is similar to the preceding one, with
the only difference being that the valid elements must be moved and
written as they are read.

\begin{lemma}\label{lem:rebuild}
  Rebuilding the beta working set costs $O(\branchfactor{}M/B)$ reads and writes. \qed
\end{lemma}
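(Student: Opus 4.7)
The plan is to mirror the structure of Lemma~\ref{lem:extract} almost verbatim, but carefully account for the writes that are now unavoidable because the purpose of rebuilding is to physically compact the valid elements rather than merely record which ones are invalid.

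First I would bound the size of everything being scanned. By construction, a rebuild is triggered only when either the beta working set has grown to its maximum size of $2\branchfactor M$ elements (so it occupies $O(\branchfactor M/B)$ blocks) or when $\branchfactor$ extractions have accumulated, in which case the invalidation list has at most $\branchfactor$ entries. So a single read-pass over the working set costs $O(\branchfactor M/B)$ reads, and keeping one pair of the invalidation list in memory at a time, advanced in lockstep with the index, adds only $O(\branchfactor)$ reads over the whole pass. The validity test for each element is identical to the one used in Lemma~\ref{lem:extract}: within the interval $(i_{j-1}, i_j]$, an element is invalid iff its key is at most $x_j$.

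Next I would bound the writes. As the scan proceeds, valid elements are packed densely into an output buffer of size $B$ that is flushed whenever full. Since the number of valid elements is at most $2\branchfactor M$, the total number of flushed output blocks is $O(\branchfactor M/B)$, contributing $O(\branchfactor M/B)$ writes. Clearing the invalidation list after the rebuild (since all implicit deletions have now been realized explicitly) requires overwriting $O(\branchfactor)$ entries, which is a lower-order term.

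Summing the two contributions gives the claimed $O(\branchfactor M/B)$ reads and writes. The only subtlety, and the place I would be most careful, is making sure the scanning of the invalidation list is really I/O-free in an amortized sense even though the list itself is not required to be stored I/O-efficiently: because the list fits in $O(\branchfactor)$ space and $\branchfactor \le \wcost \le M/B$ by assumption, a single scan of it is dominated by the $\branchfactor M/B$ term, so no amortization against prior inserts/extracts is actually needed here.
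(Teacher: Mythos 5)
Your proof is correct and takes essentially the same approach as the paper, which dispatches the lemma in a single sentence (``similar to the preceding one, with the only difference being that the valid elements must be moved and written as they are read''). One minor note: the inequality $\branchfactor \le \wcost \le M/B$ you invoke in the last step is neither assumed in the paper nor needed---$O(\branchfactor M/B + \branchfactor) = O(\branchfactor M/B)$ holds trivially because $M/B \ge 1$.
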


\begin{theorem}
  Our priority queue, if initially empty, supports $n$ \textsc{Insert} and
  \textsc{Delete-Min} operations with an amortized cost of
  $O((\branchfactor{}/B)(1+\log_{\branchfactor{}M/B} n))$ reads and
  $O((1/B)(1+ \log_{\branchfactor{}M/B} n ))$ writes per
  operation.
\end{theorem}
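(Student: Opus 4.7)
The plan is to bound the amortized cost of each \textsc{Insert} and \textsc{Delete-Min} by separately charging (a) work on the alpha working set, (b) work on the beta working set, and (c) work on the buffer tree, and then amortizing bulk operations over the elements or operations they serve. The alpha working set is resident in memory, so all comparisons and updates to it are free; likewise the maximum of the beta working set is kept in memory, so the initial key comparisons that decide where a new element lands are free.

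For the beta working set, I would lean on Lemmas~\ref{lem:extract} and~\ref{lem:rebuild}. Each refill of alpha from beta costs $O(\branchfactor M/B)$ reads and $O(1)$ amortized writes and supplies $M/4$ elements, which I would charge at $O(\branchfactor/B)$ reads and $O(1/M)$ writes per \textsc{Delete-Min}. Each rebuild costs $O(\branchfactor M/B)$ reads and writes and is triggered either after $\branchfactor$ extractions (i.e., $\Theta(\branchfactor M)$ \textsc{Delete-Min}s) or after enough inserts to push beta past $2\branchfactor M$; in either case the per-operation charge is $O(1/B)$ reads and writes. Appending an inserted element to beta itself costs $O(1/B)$ amortized writes because the trailing block is kept in memory.

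The remaining work moves $\Theta(\branchfactor M)$ elements at a time between beta and the buffer tree. When beta overflows, I would use Lemma~\ref{lem:selsort} to sort its elements at cost $O(\branchfactor^2 M/B)$ reads and $O(\branchfactor M/B)$ writes, then apply Theorem~\ref{thm:buffertree} to the $\branchfactor M$ largest. Amortizing all of this over the $\Omega(\branchfactor M)$ inserts that populated beta yields exactly the claimed per-\textsc{Insert} bound. Symmetrically, when beta empties during a \textsc{Delete-Min}, the algorithm flushes the root-to-leftmost-leaf path and moves the leftmost leaf (holding $\Theta(\branchfactor M)$ elements) into beta. The leaf migration itself costs $O(\branchfactor M/B)$ reads and writes, amortizing to $O(1/B)$ per subsequent \textsc{Delete-Min}.

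The main obstacle will be ensuring that the \textsc{Delete-Min}-triggered buffer emptying along the leftmost path does not double-count the amortized per-element charges already absorbed by Theorem~\ref{thm:buffertree}. Following Arge's accounting, each element traverses each buffer on its root-to-leaf path at most once and is charged, via Lemma~\ref{lem:flushcost}, $O(\branchfactor/B)$ reads and $O(1/B)$ writes for that traversal, regardless of whether the emptying was triggered by a downstream insert or by a \textsc{Delete-Min} path flush. Summing the alpha, beta, and buffer-tree contributions across all $n$ operations yields the stated amortized bounds of $O((\branchfactor/B)(1+\log_{\branchfactor M/B} n))$ reads and $O((1/B)(1+\log_{\branchfactor M/B} n))$ writes per operation.
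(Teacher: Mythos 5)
Your decomposition into alpha, beta, and buffer-tree charges follows the paper's structure, and your Insert-side analysis is essentially correct (the paper also adds the $O(\branchfactor M/B)$ rebuild cost from Lemma~\ref{lem:rebuild} when beta overflows, but the buffer-tree insertion cost dominates it, as you observe). The gap is in the Delete-Min-triggered refill of beta. You write that the path-flush cost is absorbed by per-element charges ``via Lemma~\ref{lem:flushcost}, regardless of whether the emptying was triggered by a downstream insert or by a \textsc{Delete-Min} path flush,'' and you account only an extra $O(\branchfactor M/B)$ for the leaf migration. But Lemma~\ref{lem:flushcost} only applies to \emph{full} buffers (its bound of $O(\branchfactor X/B)$ reads per $X$-element buffer relies on $X > \branchfactor M$ to absorb the fixed overhead). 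The forced emptyings along the root-to-leftmost-leaf path operate on buffers of arbitrary, possibly tiny, occupancy. For an $X$-element non-full buffer the cost is $O(\branchfactor^2 M/B + X/B)$ reads and $O(\branchfactor M/B + X/B)$ writes: the $O(X/B)$ piece can be charged to the elements as in Arge's accounting, but the additive $O(\branchfactor^2 M/B)$-read and $O(\branchfactor M/B)$-write overhead per node is independent of $X$ and cannot be charged to elements at all.

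That fixed overhead is in fact the dominant term. Summed over the $O(1+\log_{\branchfactor M/B} n)$ nodes on the path it is $O((\branchfactor^2 M/B)(1+\log_{\branchfactor M/B} n))$ reads and $O((\branchfactor M/B)(1+\log_{\branchfactor M/B} n))$ writes per refill, and the paper amortizes this against the $\Omega(\branchfactor M)$ elements in the deleted leaf to obtain the stated $O((\branchfactor/B)(1+\log_{\branchfactor M/B} n))$-read, $O((1/B)(1+\log_{\branchfactor M/B} n))$-write per-operation bound. Your accounting treats the refill as if it were cheaper by roughly a $\log_{\branchfactor M/B} n$ factor, so the bound you would derive for Delete-Min does not actually include the logarithmic term; it only happens to match the theorem because the theorem's statement also covers the (correctly-handled) Insert side. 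You need the explicit split of the per-node emptying cost into a per-element part and a fixed $O(\branchfactor^2 M/B)$-read overhead, and the observation that the latter is amortized against the $\Theta(\branchfactor M)$ leaf elements rather than against the buffer contents.
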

\begin{proof}
  Inserts are the easier case.  Inserting into the alpha working set
  is free.  The amortized cost of inserting directly into the beta
  working set (a simple append) is $O(1/B)$ reads and writes, assuming
  the last block stays in memory.  The cost of inserting directly into
  the buffer tree matches the theorem.  Occasionally, the beta working
  set overflows, in which case we rebuild it, sort it, and insert
  elements into the buffer tree.  The rebuild costs $O(\branchfactor{}M/B)$
  reads and writes (Lemma~\ref{lem:rebuild}), the sort costs
  $O(\branchfactor{}^2M/B)$ reads and $O(\branchfactor{}M/B)$ writes (by
  Lemma~\ref{lem:selsort}), and the $\branchfactor{}M$ buffer tree inserts
  cost $O((\branchfactor{}^2M/B)(1+\log_{\branchfactor{}M/B}n))$ reads and
  $O((\branchfactor{}M/B)(1+\log_{\branchfactor{}M/B}n))$ writes (by
  Theorem~\ref{thm:buffertree}).  The latter dominates.  Amortizing
  against the $\branchfactor{}M$ inserts that occur between overflows, the
  amortized cost per insert matches the theorem statement.

  Deleting the minimum element from the alpha working set is free.
  When the alpha working set becomes empty, we extract $M/4$ elements
  from the beta working set, with a cost of $O(\branchfactor{}M/B)$ reads and
  $O(1)$ writes (Lemma~\ref{lem:extract}).  This cost may be
  amortized against the $M/4$ deletes that occur between extractions,
  for an amortized cost of $O(\branchfactor{}/B)$ reads and $O(1/M)$ writes
  per delete-min. Every $\branchfactor$ extractions of $M/4$ elements, the
  beta working set is rebuilt, with a cost of $O(\branchfactor{}M/B)$ reads
  and writes (Lemma~\ref{lem:rebuild}) or amortized $O(1/B)$ reads and
  writes per delete-min.  Adding these together, we so far have
  $O(\branchfactor{}/B)$ reads and $O(1/B)$ writes per delete-min.

  It remains to analyze the cost of refilling the beta working set
  when it becomes empty.  The cost of removing a leaf from the buffer
  tree is dominated by the cost of emptying buffers on a
  length-$O(\log_{\branchfactor{}M/B}n)$ path. Note that the buffers are not
  full, so we cannot apply Lemma~\ref{lem:flushcost}. But a similar
  analysis applies.  The cost per node is $O(\branchfactor{}^2M/B + X/B)$
  reads and $O(\branchfactor{}M/B + X/B)$ writes for an $X$-element
  buffer. As with Arge's version of the priority queue~\cite{Arge03},
  the $O(X/B)$ terms can be charged to the insertion of the $X$
  elements, so we are left with a cost of $O(\branchfactor{}^2M/B)$ read and
  $O(\branchfactor{}M/B)$ writes per buffer. Multiplying by
  $O(1+\log_{\branchfactor{}M/B}n)$ levels gives a cost of
  $O((\branchfactor{}^2M/B)(1+\log_{\branchfactor{}M/B}n))$ reads and
  $O((\branchfactor{}M/B)(1+\log_{\branchfactor{}M/B}n))$ writes.  Because
  each leaf contains at least $\branchfactor{}M/4$ elements, we can amortize
  this cost against at least $\branchfactor{}M/4$ deletions, giving a cost
  that matches the theorem.
\end{proof}

With this priority queue, sorting can be trivially implemented in
$O((\branchfactor{}n/B)(1+ \log_{\branchfactor{}M/B} n))$ reads and $O((n/B)(1+
\log_{\branchfactor{}M/B} n))$ writes, matching the bounds of the previous
sorting algorithms.

\section{Cache-Oblivious Parallel Algorithms}\label{sec:low-depth-algs}

In this section we present low-depth cache-oblivious parallel
algorithms for sorting and Fast Fourier Transform,
with asymmetric read and write costs.  Both algorithms
(i) have only polylogarithmic depth, (ii) are processor-oblivious
(i.e., no explicit mention of processors), (iii) can be
cache-oblivious or cache-aware, and (iv) map to low cache complexity
on parallel machines with hierarchies of shared caches as well as
private caches using the results of Section~\ref{sec:prelims}.
We also present a linear-depth, cache-oblivious parallel algorithm for matrix
multiplication.  All three algorithms use $\Theta(\wcost)$ fewer writes
than reads.

Our algorithms are oblivious to cache size.  However, we still assume that
the read/write ratio $\wcost{}$ is aware since it is a parameter for the
main memory instead of the caches.

\begin{figure*}[!t]\centering
\includegraphics[width=\linewidth]{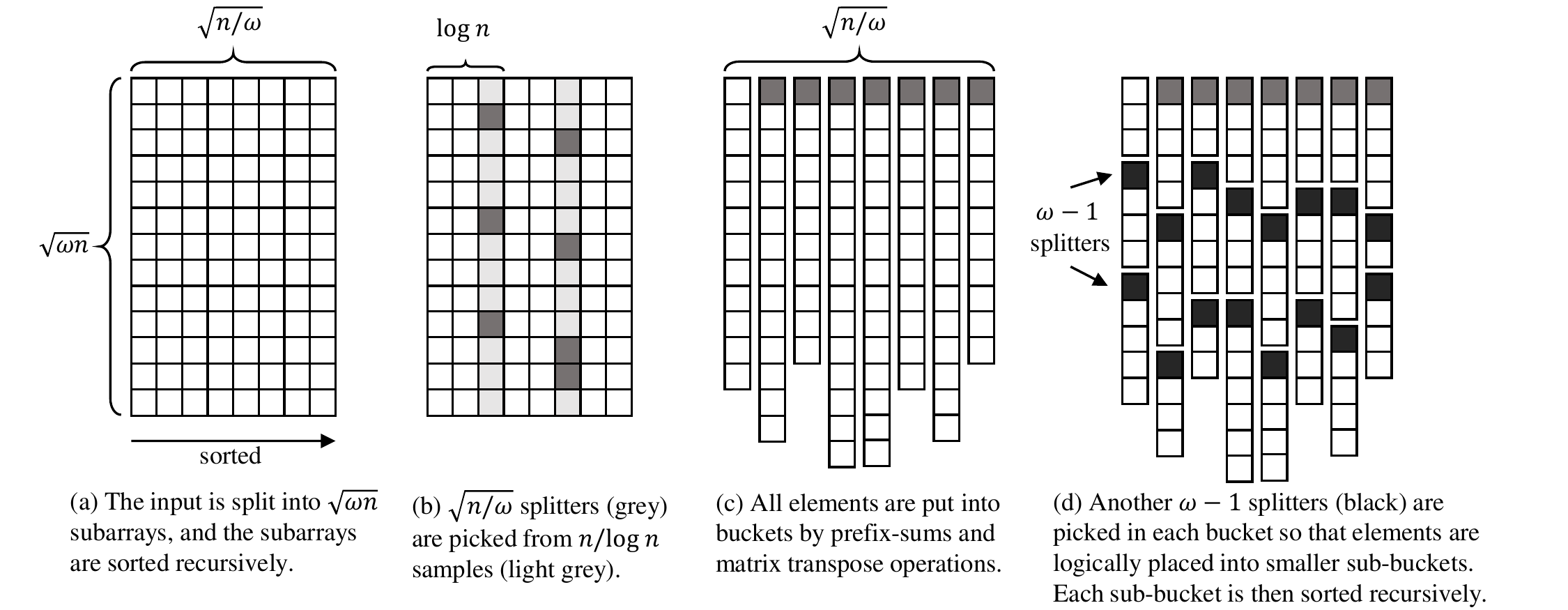}
\vspace{-1em}
\caption{The low-depth cache-oblivious algorithm on asymmetric read
  and write costs to sort an input array of size
  $n$. }\label{fig:cache-obli-sort}
\end{figure*}

\subsection{Sorting}\label{sec:low-depth-sort}

\newcommand{\wdepth}{\wcost}

We show how the low-depth, cache-oblivious sorting algorithm
from~\cite{BGS10} can be adapted to the asymmetric case.  The original
algorithm is based on viewing the input as a $\sqrt{n} \times
\sqrt{n}$ array, sorting the rows, partitioning them based on
splitters, transposing the partitions, and then sorting the buckets.
The original algorithm incurs $O\left(\left(n/B\right)\log_M\left(n\right)\right)$ reads and writes.
To reduce the number of writes,
our revised version partitions slightly differently and does extra
reads to reduce the number of levels of recursion.  The algorithm
does $O\left(\left(n/B\right) \log_{\wcost M}\left(\wcost n\right)\right)$ writes,
$O\left(\left(\wcost n/B\right) \log_{\wcost M}\left(\wcost n\right)\right)$ reads, and has depth
$O\left(\wdepth \log^2 \left(n/\wcost\right)\right)$ w.h.p.

The algorithm uses matrix transpose, prefix sums and mergesort as
subroutines.  Efficient parallel and cache-oblivious versions of these
algorithm are described in~\cite{BGS10}.  For an input of size $n$,
prefix sums has depth $O\left(\wdepth \log n\right)$ and requires $O\left(n/B\right)$ reads
and writes, merging two arrays of lengths $n$ and $m$ has depth
$O\left(\wdepth\log\left(n+m\right)\right)$ and requires $O\left(\left(n+m\right)/B\right)$ reads and writes, and
mergesort has depth $O\left(\wdepth \log^2 n\right)$ and requires
$O\left((n/B)\log_2\left(n/M\right)\right)$ reads and writes.  Transposing an $n\times m$
matrix has depth $O\left(\wdepth \log\left(n+m\right)\right)$ and requires $O\left(nm/B\right)$ reads
and writes.

Our cache-oblivious sorting algorithm works recursively, with a base
case of $n \leq M$, at which point any parallel sorting algorithm with
$O\left(n\log n\right)$ reads/writes and $O\left(\wdepth \log n\right)$ depth can be applied (e.g.~\cite{Cole1988}).

Figure~\ref{fig:cache-obli-sort} illustrates the steps of the algorithm.
Given an input array of size $n$, the algorithm first splits it into
$\sqrt{n\wcost}$ subarrays of size $\sqrt{n/\wcost}$ and recursively
sorts each of the subarrays.  This step corresponds to step (a) in
Figure~\ref{fig:cache-obli-sort}.

Then the algorithm determines the splitters by sampling.  After the
subarrays are sorted, every $\left(\log n\right)$'th element from each
row is sampled, and these $n/\log n$ samples are sorted using a
cache-oblivious mergesort.  Then $\sqrt{n/ \wcost}-1$
evenly-distributed splitters are picked from the sorted samples to
create $\sqrt{n/\wcost}$ buckets.  The algorithm then determines the
boundaries of the buckets in each subarray, which can be implemented
by merging the splitters with each row, requiring
$O\left(\wdepth\log n\right)$ depth and $O\left(n/B\right)$ writes
overall.  This step is shown as step (b) in
Figure~\ref{fig:cache-obli-sort}.  Notice that on average the size of
each bucket is $O\left(\sqrt{n\wcost}\right)$, and the largest bucket
has no more than $2\sqrt{n\wcost}\log n$ elements.

After the subarrays are split into $\sqrt{n/\wcost}$ buckets, prefix
sums and a matrix transpose can be used to place all keys destined for
the same bucket together in contiguous arrays.  This process is
illustrated as step (c) in Figure~\ref{fig:cache-obli-sort}.  This
process requires $O\left(n/B\right)$ reads and writes, and $O\left(\wdepth \log n\right)$
depth.

The next step is new to the asymmetric algorithm and is the part that
requires extra reads.  As illustrated in
Figure~\ref{fig:cache-obli-sort} (d), $\wcost-1$ pivots are chosen from
each bucket to generate $\wcost$ sub-buckets.
We sample $\max\{\wcost,\sqrt{\wcost n}/\log n\}$ samples, apply a
mergesort, and evenly pick $\wcost-1$ pivots in the sample. Then the
size of each sub-bucket can be shown to be $O\left(\sqrt{n/\wcost}\log n\right)$
w.h.p. using Chernoff bounds.
We then scan each bucket for $\wcost$ rounds to partition all elements
into $\wcost$ sub-buckets, and sort each sub-bucket recursively.

\begin{theorem}\label{thm:cache-obli-sort}
Our cache-oblivious sorting algorithm requires
  $O\left(\left(\wcost n/B\right)\log_{\wcost M}{\left(\wcost n\right)}\right)$ reads, 
  $O\left(\left(n/B\right)\log_{\wcost M}{\left(\wcost n\right)}\right)$ writes,
  and $O\left(\wdepth \log^2 \left(n/\wcost\right)\right)$ depth w.h.p.
\end{theorem}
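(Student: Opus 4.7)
\smallskip\noindent\textbf{Proof plan.}
My plan is to derive recurrences for the reads $R(n)$, writes $W(n)$, and depth $D(n)$ of the algorithm, and then solve them by a doubling argument over the levels of recursion. I would first tally the non-recursive cost incurred at the top level. Using the bounds cited from~\cite{BGS10} for the cache-oblivious primitives, the sampling, splitter selection, row--splitter merging, prefix sums, and matrix transpose of steps (b) and (c) (including the mergesort on the $n/\log n$ samples, which is a lower-order term) together account for $O(n/B)$ reads, $O(n/B)$ writes, and $O(\wdepth \log^2 n)$ depth. Step (d) then performs $\wcost$ passes over each bucket in order to pack its $\wcost$ sub-buckets contiguously while writing each element exactly once; this step adds $O(\wcost n/B)$ reads and $O(n/B)$ writes, plus another $O(\wdepth \log n)$ depth for the prefix sums used to locate output positions.

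The second ingredient is a bound on the recursive subproblem sizes. Step~(a) spawns $\sqrt{n\wcost}$ subproblems of size $\sqrt{n/\wcost}$ with total size $n$. For step~(d), I would apply standard oversampling Chernoff bounds (as in~\cite{Blelloch91}) to show that, w.h.p., the buckets produced by step~(b) have size at most $2\sqrt{n\wcost}\log n$ and the sub-buckets produced by step~(d)'s pivots have size at most $O(\sqrt{n/\wcost}\log n)$; the sub-bucket sizes also sum to exactly $n$. Hence the aggregate problem size over the recursive calls at any single level is at most $2n$ (step~(a) contributes $n$ and step~(d) contributes $n$), while the largest surviving subproblem shrinks from $n$ to $O(\sqrt{n/\wcost}\log n)$. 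These facts give the recurrences
\[
W(n) \leq \sqrt{n\wcost}\, W\!\left(\sqrt{n/\wcost}\right) + \sum_b W(m_b) + O\!\left(n/B\right),
\]
\[
R(n) \leq \sqrt{n\wcost}\, R\!\left(\sqrt{n/\wcost}\right) + \sum_b R(m_b) + O\!\left(\wcost n/B\right),
\]
with $\sum_b m_b = n$, and with the base case $n \leq M$ handled in $O(n/B)$ reads and writes and $O(\wdepth \log n)$ depth by any in-memory parallel sort such as Cole's.

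The final step is to solve these recurrences. The key observation is that the map $m \mapsto O(\sqrt{m/\wcost}\log m)$ halves the quantity $\log(\wcost m)$ up to an additive $O(\log\log m)$ at each level, so the recursion reaches the base case in $K = O\!\left(\log_2\log_{\wcost M}(\wcost n)\right)$ levels with $2^K = \Theta(\log_{\wcost M}(\wcost n))$. Summing the geometric series in which the aggregate problem size per level doubles while the per-element non-recursive cost is $O(1/B)$ writes and $O(\wcost/B)$ reads then yields exactly the stated $O((n/B)\log_{\wcost M}(\wcost n))$ writes and $O((\wcost n/B)\log_{\wcost M}(\wcost n))$ reads. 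For the depth recurrence $D(n) \leq D\!\left(O(\sqrt{n/\wcost}\log n)\right) + O(\wdepth \log^2 n)$, the $\log^2$ overhead shrinks by roughly a factor of four each level, so telescoping gives $D(n) = O(\wdepth \log^2(n/\wcost))$ w.h.p. I expect the main obstacle to be the Chernoff calculation establishing the $O(\sqrt{n/\wcost}\log n)$ sub-bucket bound for the small pivot samples, together with the routine-but-tedious verification that the additive $O(\log\log m)$ slack in the halving of $\log(\wcost m)$ is absorbed without perturbing either the number of levels or the geometric constants.
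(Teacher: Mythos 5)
Your proposal matches the paper's proof: it sets up the same three recurrences for $R$, $W$, and $D$, with the same per-level non-recursive cost accounting ($O(\wcost n/B)$ reads from the $\wcost$-pass partition into sub-buckets, $O(n/B)$ writes, $O(\wcost\log^2(n/\wcost))$ depth dominated by the mergesort for pivots) and the same w.h.p.\ sub-bucket size bound $n_i = O(\sqrt{n/\wcost}\log n)$ with $\sum n_i = n$. You actually supply more detail than the paper—which simply asserts that ``solving these recurrences proves the theorem''—by making explicit the $O(\log\log_{\wcost M}(\wcost n))$ depth of the recursion tree and the geometric doubling of aggregate problem size per level, so your argument is a fleshed-out version of the same approach rather than a departure from it.
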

\begin{proof}
All the subroutines except for the recursive calls do
$O\left(n/B\right)$ writes. The last partitioning process to put
elements into sub-buckets takes $O\left(\wcost n/B\right)$ reads and
the other subroutines require fewer reads. The overall depth is
dominated by the mergesort to find the first $\sqrt{n/\wcost}$ pivots,
requiring $O\left(\wcost\log^2 \left(n/\wcost\right)\right)$ depth per
level of recursion. Hence, the recurrence relations (w.h.p.) for read I/O's
($R$), write I/O's ($W$), and depth ($D$) are:
\vspace{-0.1in}
$$R\left(n\right)=O\left(\wcost n\over B\right)+\sqrt{\wcost n}\cdot R\left(\sqrt{n\over \wcost}\right)+\sum_{i=1}^{\sqrt{\wcost n}}R\left(n_i\right)$$
$$W\left(n\right)=O\left(n\over B\right)+\sqrt{\wcost n}\cdot W\left(\sqrt{n\over \wcost}\right)+\sum_{i=1}^{\sqrt{\wcost n}}W\left(n_i\right)$$
$$D\left(n\right)=O\left(\wcost\log^2 {n\over \wcost}\right)+\max_{i=1,\cdots,\sqrt{\wcost n}}\{D\left(n_i\right)\}$$
where $n_i$ is the size of the $i$'th sub-bucket, and $n_i= O\left(\sqrt{n/\wcost}\log n\right)$ and $\sum{n_i}=n$. The base case for the I/O complexity is $R\left(M\right)=W\left(M\right)=O\left(M/B\right)$. Solving these recurrences proves the theorem.
\end{proof}

\subsection{Fast Fourier Transform}\label{sec:fft}

We now consider a parallel cache-oblivious algorithm for computing the
Discrete Fourier Transform (DFT).  The algorithm we consider is based
on the Cooley-Tukey FFT algorithm~\cite{FFT}, and our description
follows that of~\cite{Frigo99}.  We assume that $n$ at each level of
recursion and $\wcost$ are powers of 2.  The standard cache-oblivious
divide-and-conquer algorithm~\cite{Frigo99} views the input matrix as
an $\sqrt{n} \times \sqrt{n}$ matrix, and incurs
$O\left(\left(n/B\right)\log_M\left(n\right)\right)$ reads and writes.
To reduce the number of writes, we modify the algorithm to the
following:

\begin{enumerate}\itemsep=0in
\item View input of size $n$ as a $\wcost\sqrt{n/\wcost}
  \times \sqrt{n/\wcost}$ matrix in row-major order.   Transpose
  the matrix.
\item Compute a DFT on each row of the matrix as follows
  \begin{enumerate}
  \item View the row as a $\wcost \times \sqrt{n/\wcost}$ matrix
  \item For each row $i$
    \begin{enumerate}
       \item Calculate the values of column DFTs for row $i$ using the
         brute-force method.  This will require $\wcost$ reads (each
         row) and $1$ write (row $i$) per value.
       \item Recursively compute the FFT for the row.
    \end{enumerate}
    \item Incorporate twiddle factors
    \item Transpose $\wcost \times \sqrt{n/\wcost}$ matrix
  \end{enumerate}
\item Transpose matrix
\item Recursively compute an FFT on each length $\sqrt{n/\wcost}$ row.
\item Transpose to final order.
\end{enumerate}

The difference from the standard cache-oblivious algorithms is the
extra level of nesting in step 2, and the use of a work-inefficient
DFT over $\wcost$ elements in step 2(b).  The transposes in steps 1,
2(d) and 3 can be done with $O\left(n/B\right)$
reads/writes and $O\left(\wcost\log n\right)$ depth.
The brute-force calculations in step 2(b)i require a total of $\wcost n$ reads
(and arithmetic operations) and $n$ writes.   This is where we waste a
$\wcost$ factor in reads in order to reduce the recursion depth.  The
twiddle factors can all be calculated with $O(n)$ reads and writes.
There are a total of $2 \wcost \sqrt{n/\wcost}$ recursive calls on
problems of size $\sqrt{n/\wcost}$.

Our base case is of size $M$, which uses $M/B$ reads and writes.  The
number of reads therefore satisfies the following recurrence:
\[
R\left(n\right) =
\begin{cases}
O\left(n/B\right) & \text{if $n \le M$} \\
2\wcost\sqrt{n/\wcost}\;R\left(\sqrt{n/\wcost}\right) + O\left(\wcost{}n/B\right) & \text{otherwise}
\end{cases}
\]
which solves to $R\left(n\right) = O\left(\left(kn/B\right)\log_{\wcost M}\left(\wcost n\right)\right)$, and the number of writes is
\[
W\left(n\right) =
\begin{cases}
O\left(n/B\right) & \text{if $n \le M$} \\
2\wcost\sqrt{n/\wcost}\; W\left(\sqrt{n/\wcost}\right) + O\left(n/B\right) & \text{otherwise}
\end{cases}
\]
which solves to $W\left(n\right) = O\left(\left(n/B\right)\log_{\wcost M}\left(\wcost n\right)\right)$.

Since we can do all the rows in parallel for each step, and
the brute-force calculation in parallel, the only important dependence is that
we have to do step 2 before step 5.    This gives a recurrence
$D\left(n\right) =
2D\left(\sqrt{n/\wcost}\right) + O\left(\wcost\log n\right)$ for the depth, which
solves to $O\left(\wcost\log n\log\log n\right)$.

We note that the algorithm as described requires an extra transpose
and an extra write in step 2(b)i relative to the standard version.
This might negate any advantage from reducing the number of levels,
however we note that these can likely be removed.  In particular
the transpose in steps 2(d) and 3 can be merged by viewing the results
as a three dimensional array and transposing the first and last
dimensions (this is what the pair of transposes does).  The write in
step 2(b)i can be merged with the transpose in step 1 by
combining the columnwise FFT with the transpose, and applying it
$\wcost$ times.





\subsection{Matrix Multiplication}\label{sec:matmul}

In this section, we consider matrix multiplication in the asymmetric
read/write setting. The standard cubic-work sequential algorithm
trivially uses $O\left(n^3\right)$ reads and $\Theta\left(n^2\right)$
writes, one for each entry in the output matrix. For the EM model, the
blocked algorithm that divides the matrix into sub-matrices of size
$\sqrt{M} \times \sqrt{M}$ uses $O\left(n^3/(B\sqrt{M})\right)$
reads~\cite{Blumofe96,Frigo99}.  Because we can keep each $\sqrt{M}
\times \sqrt{M}$ sub-matrix of the output matrix in memory until it is
completely computed, the number of writes is proportional to the
number of blocks in the output, which is $O\left(n^2/B\right)$. This
gives the following simple result:

\begin{theorem}
External-memory matrix multiplication can be done in $O\left(n^3/(B\sqrt{M})\right)$
reads and $O\left(n^2/B\right)$ writes.
\end{theorem}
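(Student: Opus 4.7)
The plan is to use the standard cache-aware blocked matrix multiplication and carefully track writes separately from reads, exploiting the fact that each output block can be kept resident in primary memory throughout its entire computation so that it is written exactly once.

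First I would set up the block decomposition. Partition each of the $n \times n$ input matrices $A$ and $B$, and the output matrix $C$, into submatrices of size $s \times s$ with $s = \Theta(\sqrt{M})$, chosen small enough that three such blocks fit simultaneously in primary memory (using the tall-cache assumption if needed to handle the $O(M/B)$ bookkeeping). Call these blocks $A_{ik}$, $B_{kj}$, $C_{ij}$, with indices running up to $n/s$. The outer loop iterates over output blocks $C_{ij}$ and the inner loop over $k$, so $C_{ij} = \sum_{k} A_{ik} B_{kj}$.

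Next I would describe the per-output-block computation, which is the key to the write bound. For a fixed $(i,j)$, allocate a block-sized region in primary memory to accumulate $C_{ij}$, initialized to zero with no external writes. Then for $k = 1, \ldots, n/s$, read $A_{ik}$ and $B_{kj}$ into primary memory (costing $O(M/B)$ reads each), multiply them in-cache, and add the result into the resident copy of $C_{ij}$. Only after all $k$ have been processed do we flush $C_{ij}$ to external memory, incurring $O(M/B)$ writes. Crucially, no intermediate partial sum of $C_{ij}$ ever touches external memory.

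Finally I would add up the costs. There are $(n/s)^2 = \Theta(n^2/M)$ output blocks; each is written exactly once at cost $O(M/B)$, giving a total of $O(n^2/B)$ writes. For reads, each of the $\Theta(n^2/M)$ output blocks triggers $n/s = \Theta(n/\sqrt{M})$ inner iterations, each reading two input blocks at cost $O(M/B)$, for a total of $O((n^2/M)\cdot(n/\sqrt{M})\cdot(M/B)) = O(n^3/(B\sqrt{M}))$ reads. The only subtle point is ensuring the three simultaneously-resident blocks fit, which is handled by the constant in $s = \Theta(\sqrt{M})$; no real obstacle arises beyond verifying that the asymmetric model (unit reads, $\omega$-cost writes) does not alter this accounting, since we already separate the two counts.
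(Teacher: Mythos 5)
Your proposal is correct and follows essentially the same route the paper takes: the standard $\sqrt{M}\times\sqrt{M}$ blocked algorithm, with the key observation that each output block $C_{ij}$ can be held resident in cache across all $n/\sqrt{M}$ inner iterations and flushed exactly once, yielding $O(n^2/B)$ writes while the read count remains the classical $O(n^3/(B\sqrt{M}))$.
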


We now turn to a divide-and-conquer algorithm for matrix
multiplication, which is parallel and cache-oblivious.  We assume that
we can fit $3M$ elements in cache instead of $M$, which only affects
our bounds by constant factors.  Note that the standard
cache-oblivious divide-and-conquer algorithm~\cite{Blumofe96,Frigo99}
recurses on four parallel subproblems of size $n/2 \times n/2$, resulting in
$\Theta\left(n^3/(B\sqrt{M})\right)$ reads and writes.  To reduce the
writes by a factor of $\Theta\left(\wcost\right)$, we instead recurse
on $\wcost^2$ parallel subproblems (blocks) of size $n/\wcost \times n/\wcost$.
On each level of
recursion, computing each output block of size $n/\wcost \times
n/\wcost$ requires summing over the $\wcost$ products of
$n/\wcost\times n/\wcost$ size input matrices.  We assume the
recursive calls writing to the same target block are processed
sequentially so that all writes (and reads) can be made at the leaves
of the recursion to their final locations.

We now partition a task into $\wcost{}\times \wcost{}$ subproblems,
which may lead to up to $n\wcost{}/\sqrt{M} \times n\wcost{}/\sqrt{M}$
base cases and thus match the overall work of the standard cache-oblivious
algorithm.
Instead, in the first round, we pick an integer $b$ uniformly at random
from 1 to $\lfloor \log_2\wcost{}\rfloor$, and recurses on $b \times b$
subproblems.
After this round, we run the divide-and-conquer algorithms just described before with the branching factor of $\wcost{}\times \wcost{}$.
We show that this extra first step reduces $O(\log \wcost{})$ in exception.

Now we analyze the cost of this algorithm.
The base case is when the problem size $n$ is no more than $\wcost\sqrt{M}$.
At this point each
of its subproblems of size no more than $\sqrt{M} \times \sqrt{M}$ is writing into
an output block of size at most $M$, which fits in cache.  Therefore since we
do the products for the output blocks sequentially, the result can
stay in cache and only be written when all $\wcost$ are done.  The
number of writes is therefore $n^2/B$ total for $n\le\wcost\sqrt{M}$.
For reads all of the $\wcost^3$ subproblems might require reading
both inputs so there are $2\wcost{} n^2/B$ reads.  The non-base recursive
calls do not contribute any significant reads or writes since all
reading and writing to the arrays is done at the leaves.  This gives
us the following recurrence for the number of writes for an $n \times n$
matrix:
\[
W\left(n\right) =
\begin{cases}
n^2/B & \text{if $n \leq \wcost\sqrt{M}$} \\
O(1) + {1\over \lfloor \log_2\wcost{}\rfloor}\sum_{k=1}^{\lfloor \log_2\wcost{}\rfloor}{(2^k)^3W(n/2^k)} & \text{the first round}\\
O(1) + \wcost^3W\left(n/\wcost\right) & \text{otherwise}
\end{cases}
\]

This solves to $W\left(n\right) = O\left(n^3/(
B\sqrt{M}\log\wcost)\right)$ in expectation, which is a factor of $\wcost$ less than for the
standard EM or cache-oblivious matrix multiply.
The number of
reads satisfies:
\[
R\left(n\right) =
\begin{cases}
2 \wcost n^2/B & \text{if $n \leq \wcost\sqrt{M}$} \\
O(1) + {1\over \lfloor \log_2\wcost{}\rfloor}\sum_{k=1}^{\lfloor \log_2\wcost{}\rfloor}{(2^k)^3R(n/2^k)} & \text{the first round}\\
O(1) + \wcost^3R\left(n/\wcost\right) & \text{otherwise}
\end{cases}
\]
This solves to $R\left(n\right) = O\left(n^3\wcost/(B\sqrt{M}\log \wcost)\right)$ in expectation.

Because the products contributing to a block are done sequentially,
the depth of the algorithm satisfies the recurrence $D(n) = \wcost D(n/\wcost) + O(1)$
with base case $D(1) = \wcost$,
which solves to $O(\wcost n)$.
This gives the following theorem:

\begin{theorem}
Our cache-oblivious matrix multiplication algorithm requires expected
$O\left(n^3\wcost/(B\sqrt{M}\log\wcost)\right)$ reads and
$O\left(n^3/(B\sqrt{M}\log\wcost)\right)$ writes, and has
$O\left(\wcost n\right)$ depth.
\end{theorem}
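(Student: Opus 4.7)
The plan is to solve each of the three recurrences in turn, starting with the depth (which is cleanest) and then reducing the read/write bounds to a single observation about how the random first round smooths out base-case rounding.

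For the depth, unrolling $D(n) = \wcost D(n/\wcost) + O(1)$ down to $D(1) = \wcost$ gives, after $L = \log_\wcost n$ levels, $D(n) = \wcost^L D(1) + O(\sum_{i=0}^{L-1} \wcost^i) = \wcost^{L+1} + O(\wcost^L) = O(\wcost \cdot n)$. The randomized first round adds only $O(1)$ depth because its $2^{3b}$ sub-calls run in parallel (the sequentialization is only over the $b$-dimension summing into one block, but that $O(2^b) \le O(\wcost)$ contribution is absorbed into the leading term).

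For writes, I first analyze the ``otherwise'' branch alone, i.e., pure $\wcost$-subdivision starting from an arbitrary size $n'$. Unrolling $W(n') = \wcost^3 W(n'/\wcost) + O(1)$ to the base case and noting that the $+O(1)$ terms form a geometric series dominated by the leaves, I get $W(n') = \wcost^{3j^*} \cdot (n'/\wcost^{j^*})^2/B = n'^3/(m B)$, where $m = n'/\wcost^{j^*}$ is the actual base-case side length reached. Since the recursion stops as soon as the size falls in $[\sqrt{M}, \wcost\sqrt{M}]$, this $m$ lies in that interval, and its exact value is a ceiling-type function of $n'$. The principal difficulty is that for a worst-case $n'$ the realized $m$ can be as small as $\sqrt{M}$, at which point the bound degrades to the standard $n'^3/(\sqrt{M} B)$ and no savings are obtained.

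The randomization in the first round is designed precisely to cancel these ceiling losses in expectation. For a random $b \in \{1,\ldots,\lfloor \log_2 \wcost\rfloor\}$, the first round produces $2^{3b}$ subproblems of size $n/2^b$; each contributes $W(n/2^b) = n^3/(2^{3b} m_b B)$, where $m_b \in [\sqrt{M}, \wcost\sqrt{M}]$ is the base-case size reached from starting size $n/2^b$. Summing, the $2^{3b}$ cancels, yielding a total of $n^3/(m_b B)$ for each draw of $b$. The key observation is that as $b$ ranges over its $\log_2 \wcost$ values, the starting sizes $n/2^b$ cover a factor-$\wcost$ range on a logarithmic scale, so the realized $m_b$ sweeps through a geometric sequence of values covering $[\sqrt{M}, \wcost\sqrt{M}]$. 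Hence $\frac{1}{\log_2\wcost}\sum_b 1/m_b = \Theta(1/(\sqrt{M} \log \wcost))$ (a geometric series with ratio $1/2$), giving $E[W(n)] = O(n^3/(\sqrt{M} B \log\wcost))$. The read recurrence is structurally identical but the base case carries an extra factor of $2\wcost$ from $R(\wcost\sqrt{M}) = 2\wcost(\wcost\sqrt{M})^2/B$, so the same argument yields $E[R(n)] = O(\wcost n^3/(\sqrt{M} B \log\wcost))$. The hardest part will be this geometric-sum argument: carefully showing that the $\log_2\wcost$ sampled starting sizes do indeed land in distinct powers-of-two ``bins'' of $[\sqrt{M}, \wcost\sqrt{M}]$, handling the floor in $\lfloor \log_2 \wcost \rfloor$ when $\wcost$ is not a power of two, and confirming that the lower-order $+O(1)$ terms from all levels of recursion remain dominated by the leaf contribution.
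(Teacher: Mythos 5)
Your proposal is correct and follows the same approach as the paper: write down the read/write/depth recurrences and solve them, with the key point being that the random first round smooths the base-case size. The paper only states the recurrences and claims the solutions; your argument supplies the missing substance. In particular, your geometric-sweep observation — that the $2^{3b}$ factor cancels so each draw of $b$ contributes exactly $n^3/(m_b B)$ writes, and that as $b$ ranges over $\{1,\dots,\lfloor\log_2\wcost\rfloor\}$ the realized base-case sizes $m_b$ hit each of the $\log_2\wcost$ powers-of-two in $(\sqrt{M},\wcost\sqrt{M}]$ exactly once (a short modular-arithmetic check, using that $n$, $\wcost$ and $\sqrt{M}$ are powers of two), so $\frac{1}{\log_2\wcost}\sum_b 1/m_b = \Theta(1/(\sqrt{M}\log\wcost))$ — is exactly the mechanism behind the paper's claimed expected $\Theta(\log\wcost)$ improvement. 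The remaining details you flag are routine: the $\wcost^{3i}\cdot O(1)$ terms sum to $O(\wcost^{3(j^*-1)})$, which is dominated by the leaf term $\wcost^{3j^*}m^2/B$ since $m^2/B \ge M/B \ge B$ under the tall-cache assumption, and the paper assumes $\wcost$ is a power of two so the floor is vacuous. One minor phrasing nit: the first round does not add ``only $O(1)$ depth'' — its contribution is $2^b D(n/2^b) = \Theta(\wcost n)$ — but as your parenthetical indicates, it is of the same order as the rest of the recursion, so the $O(\wcost n)$ bound stands.
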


Comparing with the standard cache-oblivious matrix multiplication algorithm,
the overall read and write cost of new algorithm expects to be reduced by a factor of $O(\log\wcost{})$.

\section{Conclusion}\label{sec:conclusion}

Motivated by the high cost of writing relative to reading in emerging
non-volatile memory technologies, we have considered a variety of
models for accounting for read-write asymmetry, and proposed and
analyzed a variety of sorting algorithms in these models.  For the
asymmetric RAM and PRAM models, we have shown how to reduce the number
of writes from $O(n \log n)$ to $O(n)$ without asymptotically
increasing the other costs (reads, parallel depth).  For the
asymmetric external memory models (including the cache-oblivious
model) the reductions in writes are mostly more modest, often
increasing the base of a logarithm, and at the cost of asymptotically
more reads.  However, these algorithms might still have practical
benefit.  We also presented a cache-oblivious matrix-multiply that
asymptotically reduces the overall cost by a factor of $O(\log\wcost)$ in expectation.
Future work includes proving lower
bounds for the asymmetric external memory models, devising
write-efficient algorithms for additional problems, and performing
experimental studies.

\section*{Acknowledgments}
This research was supported in part by NSF grants CCF-1218188,
CCF-1314633, and CCF-1314590,
and by the Intel Science and Technology Center for Cloud Computing.


\balance

\appendix
\section{Appropriate Range of the Branching Factor $\branchfactor$}
\label{app:branchfactor}

In our new Asymmetric External Memory mergesort, we want to pick an appropriate branching factor $\branchfactor$ to ensure a better performance comparing to the classic mergesort on External Memory.

Based on Theorem~\ref{lem:mergesort}, the I/O complexity of Algorithm~\ref{alg:mergesort} is 
$$R(n)+\wcost{}W(n)=(\wcost{}+\branchfactor+1)\left\lceil \frac{n}{B}\right\rceil\left\lceil\log_{\frac{\branchfactor{}M}{B}}\left(\frac{n}{B}\right)\right\rceil$$ The I/O complexity for the classic mergesort is $R(n)+\wcost{}W(n)=(\wcost{}+1)\left\lceil \frac{n}{B}\right\rceil\left\lceil\log_{\frac{M}{B}}\left(\frac{n}{B}\right)\right\rceil$.
Assume $n$ is big enough so we ignore the ceiling function in number of levels for merging, then the following inequality should hold: 
$$(\wcost{}+\branchfactor+1)\left\lceil \frac{n}{B}\right\rceil\log_{\frac{\branchfactor{}M}{B}}\left(\frac{n}{B}\right)<(\wcost{}+1)\left\lceil \frac{n}{B}\right\rceil\log_{\frac{M}{B}}\left(\frac{n}{B}\right)$$ 
which solves to $\displaystyle{\branchfactor \over \log\branchfactor}<{\wcost+1\over \log{M \over B}}$, or $\displaystyle{\branchfactor \over \log\branchfactor}<{\wcost\over \log{M \over B}}$ if we assume $\wcost \gg 1$. If we plug in parameters $\wcost{}$, $M$ and $B$ of real-world hardwares, then any integer $\branchfactor$ s.t. $1\leq \branchfactor\leq 0.3\wcost{}$ satisfies the inequality, and decreases the I/O complexity of mergesort. 
\end{document}